\crefname{equation}{}{}
\Crefname{equation}{}{}
\def\calA{\mathcal{A}}
\def\calI{\mathcal{I}}
\def\calM{\mathcal{M}}
\newcommand{\mM}{\mathcal{M}}
\def\P{\mathbb{P}}
\newcommand{\wt}[1]{\widetilde{#1}} 
\newcommand{\mR}{\mathbb{R}}
\newcommand{\C}{\mathbb{C}} 
\newcommand{\E}{\mathbb{E}} 
\renewcommand{\P}{\mathbb{P}} 
\newcommand{\cov}{\mathop{\rm Cov}} 
\newcommand{\indic}[1]{\mbf{1}\left\{#1\right\}} 
\def\singlespace{\def\baselinestretch{1}\@normalsize}
\newtheoremstyle{mythmstyle}
{8 pt} 
{3 pt} 
{} 
{} 
{\bfseries} 
{.} 
{.5em} 
{} 
\theoremstyle{plain}
\def\thm@space@setup{%
	\thm@preskip=6pt plus 1pt minus 1pt
	\thm@postskip=\thm@preskip 
}
\newtheorem{theorem}{Theorem}[section]
\newtheorem{lemma}[theorem]{Lemma}
\newtheorem{remark}{Remark}
\newtheorem*{example*}{Example}
\newtheorem*{definition*}{Definition}
\newtheorem*{remark*}{Remark}
\crefname{definition}{\textbf{definition}}{definitions}
\Crefname{definition}{Definition}{Definitions}
\crefname{assumption}{\textbf{assumption}}{assumptions}
\Crefname{assumption}{Assumption}{Assumptions}
\def\singlespace{\def\baselinestretch{1}\@normalsize}
\def\newpage{\vfill\eject}
\def\wh{\widehat}
\def\wt{\widetilde}
\newdimen\biblioindent    \biblioindent=30pt
\def\beqr{\begin{eqnarray}}
	\def\eeqr{\end{eqnarray}}
\def\beqrs{\begin{eqnarray*}}
	\def\eeqrs{\end{eqnarray*}}
\def\beq{\begin{equation}}
\def\eeq{\end{equation}}
\def\beqn{\begin{eqnarray}}
\def\eeqn{\end{eqnarray}}
\def\beqnn{\begin{eqnarray*}}
\def\eeqnn{\end{eqnarray*}}
\def\wh{\widehat}
\def\wt{\widetilde}
\newcommand{\defn}{\stackrel{\mbox{{\tiny def}}}{=}}
\newcommand{\trans}{^{\mbox{\tiny{T}}}}
\def\defby{\stackrel{\mbox{\textrm{\rm\tiny def}}}{=}}
\def\calA{\mathcal{A}}
\def\calI{\mathcal{I}}
\def\calM{\mathcal{M}}
\newcommand{\eps}{\mbox{$\varepsilon$}}
\def\indic{\mathbb I} 
\begin{document}
	\renewcommand{\baselinestretch}{1.3}

	\title {\bf  Robust Model Selection with Application in Single-Cell Multiomics Data }
	\author{   Zhanrui Cai  }
	\date{\empty}
	
	\maketitle
	\renewcommand{\baselinestretch}{1.5}
	\baselineskip=24pt
	\noindent{\bf Abstract:}

		Model selection is critical in the modern statistics and machine learning community. However,  most existing works do not apply to heavy-tailed data, which are commonly encountered in real applications, such as the single-cell multiomics data. In this paper, we propose a rank-sum based approach that outputs a confidence set containing the optimal model with guaranteed probability. Motivated by conformal inference, we developed a general method that is applicable without moment or tail assumptions on the data. We demonstrate the advantage of the proposed method through extensive simulation and a real application on the COVID-19 genomics dataset \citep{stephenson2021single}. To perform the inference on rank-sum statistics,  we derive a general Gaussian approximation theory for high dimensional two-sample U-statistics, which may be of independent interest to the statistics and machine learning community.

	\par \vspace{9pt} \noindent {\it Key words and phrases: } Heavy-tailed data; High-dimensional data; Model selection.

	\pagestyle{plain}
	
	\newpage
	
		\section{Introduction}
	
	With the advancement of techniques, modern statistics and machine learning communities developed numerous complex models to capture the underlying trends among datasets. Of particular interest are data with heavy-tailed distributions, which are ubiquitous in real applications. To achieve stability and robustness, a number of recent works have been proposed for different tasks in heavy-tailed data, including but not limited to mean estimation, high dimensional regression, low-rank matrix recovery, etc. See \cite{fan2014adaptive, fan2017estimation, lugosi2019mean, wang2020tuning, fan2021shrinkage} and references therein for examples. However, how to develop a general procedure for model selection in heavy-tailed data is still a  crucial yet challenging open problem in the literature, especially for high-dimensional datasets.
	
	Due to its simplicity and wide adaptivity, cross-validation is perhaps the most popular and practical tool for model selection. By training and testing the candidate models on different subsets of the data,  cross-validation aims to choose the model with the best prediction accuracy on testing datasets \citep{hastie2009elements}. Consider a finite set of candidate models $\calM $ with index $\{1, 2,\dots, M\}$, where each $m\in\calM$ can be a model or a tuning parameter value. The most common $K$-fold cross-validation begins with dividing the data into $K$ folds. Each candidate model is trained on the data excluding the $k$-th fold for $k=1,2,\dots, K$, and then validated by evaluating the empirical prediction risk of the fitted model on the $k$-th fold. The final cross-validated risk is then obtained by averaging the empirical prediction risks while rotating each fold as the hold-out set. The selected model is the one that has the smallest cross-validated prediction risk. 
	
	Despite its popularity, the literature has also shown that cross-validation may fail to select the best model consistently under many natural scenarios. For example, in the low dimensional linear regression, researchers have demonstrated that cross-validation is only consistent when the ratio of the training sample size and testing sample size goes to zero, which excludes the most popular $K$-fold or leave-one-out cross-validation \citep{shao1993linear, yang2007consistency}. In tuning parameter selection, researchers have found that cross-validation or AIC fails to select the penalization parameter that achieves the oracle property, and the BIC is preferred \citep{wang2007tuning, chen2008extended, lee2014model}. However, the information-based criterion usually requires clear definitions of the likelihood and the degree of freedom, which may be inapplicable for popular machine learning algorithms, such as neural networks. \cite{lei2020cross} proposed the cross-validation with confidence (CVC) that achieves the consistency of model selection with conventional sample splitting or V-fold cross-validation. While traditional cross-validated model selection can be viewed as a point estimator $\wh m_{CV}\in\calM$, CVC  provides confidence set ${\wh \calM}_{CVC}\subset\calM$ that includes the optimal model at a pre-specified probability level. The selected models in  ${\wh \calM}_{CVC}$ are all nearly optimal, and the preferred model can then be chosen based on specific rules or expert knowledge. For example, one may select the sparsest model in the high dimensional linear regression when sparsity is the desired feature.

	In practice, a caveat of CV or CVC is their limited applicability for heavy-tailed data, as the estimation of cross-validated risks and the corresponding asymptotic distribution requires non-trivial trail conditions on the data as well as the loss function considered. This paper is motivated by a COVID-19 genomics dataset \citep{stephenson2021single},  which entails heavy tail distributions for both the covariates and responses. The data contains measurements of RNA-seq and surface proteins of human blood immune cells. The RNA-seq is usually a high dimensional sparse vector, while the surface proteins are generally low dimensional and are of direct interest to scientists because they are functionally involved in cell signaling and cell-cell interactions \citep{davis2007intercellular}. While gene expression (RNA-seq) data has been extensively studied in the single-cell literature,  recently, researchers have been interested in building predictive models to predict surface proteins based on RNA-seq \citep{zhou2020surface, cai2022PNAS, cai2021asymptotic}, as well as general imputation methods for multiomics data \citep{du2022robust}. As illustrated in Figure \ref{fig: des_real_data}, both the gene expressions and proteins are heavy-tail distributed, and the moment or tail conditions can be easily violated.

	To achieve robust model selection, we propose a novel solution based on rank-sum, where we select the model confidence set with rank-sum comparison. The new method is inspired by recent development in conformal inference \citep{vovk2005algorithmic, lei2018distribution}, which guarantees valid type-I error control with minimal assumptions on the data and models. While previous literature on conformal prediction mainly focuses on providing valid prediction intervals, we show that it can also be applied in robust model selection for heavy-tailed data, as demonstrated both theoretically and numerically. To prove the consistency of the proposed method, we develop a general approximation theory for high dimensional two-sample U-statistics, which may be of independent interest to the statistics community.

	\begin{figure}
		\centering
		\begin{subfigure}[b]{0.475\textwidth}
			\centering
			\includegraphics[scale=0.4]{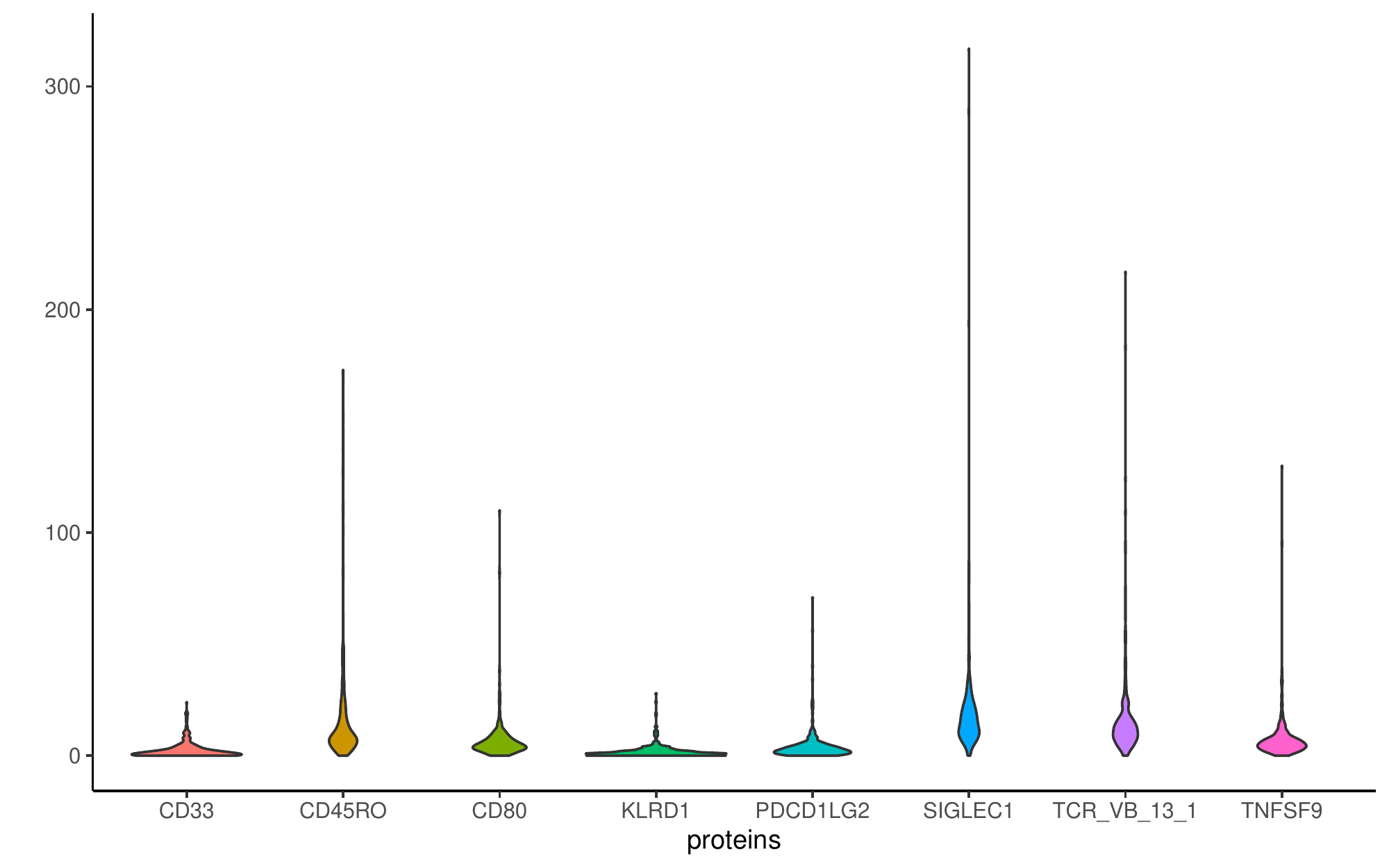}
		\end{subfigure}
		\hfill
		\centering
		\begin{subfigure}[b]{0.475\textwidth}
			\centering
			\includegraphics[scale=0.4]{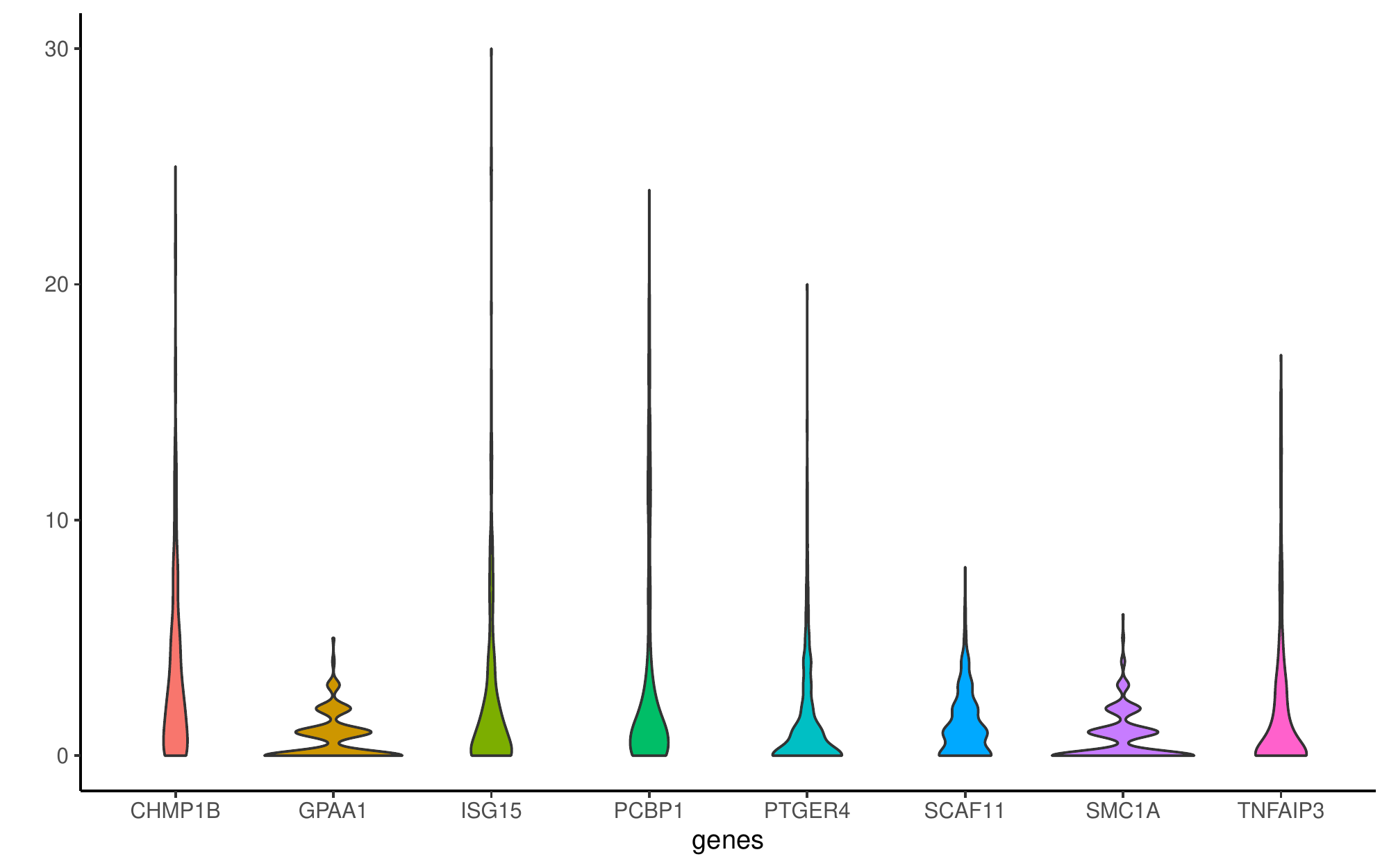}
		\end{subfigure}
		\caption{The violin plots of 8 randomly selected proteins (left) and genes (right) in the COVID-19 genomics dataset \citep{stephenson2021single}. }\label{fig: des_real_data}
	\end{figure}
	
	The approximation theory of the maximum (minimum) of the high dimensional vector is an important topic in the recent statistics literature. \cite{chernozhukov2013gaussian} established the conditions under which the distribution of the maximum of a high dimensional mean vector can be approximated by the maximum of the mean of a Gaussian random vector with the same covariance matrix. Related works include but are not limited to \cite{chernozhukov2015comparison, chernozhukov2017central}, etc. \cite{chen2018gaussian} extended the applicability of this approach to the high dimensional vector of one-sample U-statistics, where the kernel of the U-statistics is symmetric in its arguments. In our paper, we further extend this line of research in two directions: 1) we consider two-sample U-statistics due to the asymmetric nature of rank-sum comparison; 2) the two samples can possibly be weakly dependent on each other. The newly developed tool is readily applicable to other high-dimensional inference problems, such as two-sample testing, change point detection, the goodness of fit testing, etc.
	
	\section{Methodology}
	
	\subsection{Preliminary}
	
	We consider general model selection in supervised learning. Consider data $D = \{(X_i, Y_i), i=1,\dots, n\}$ independently drawn from a common distribution. Assume that $X_i\in\mR^{d}$, $Y_i\in\mR$, and
	\beqrs
	Y_i = f(X_i) + \eps_i,
	\eeqrs
	where $f:\mR^{d}\rightarrow \mR$ is an unknown function and $\E(\eps|X) = 0$. Let $\wh f_m$ be an estimate of $f$ based on a model $m\in\calM$, where $\mathcal M$ is a finite collection of candidate models. Cross-validation approximates the predictive risk for every  $\wh f_m$, then choose the model with the smallest estimated risk. Specifically, the predictive risk of $\wh f_m$ is 
	\beqrs
	R(\wh f_m) = \E\left[ l(\wh f_m(X), Y) \mid \wh f_m\right],
	\eeqrs
	where $(X, Y)$ is a new random draw from the original distribution of the data, and $l(\cdot, \cdot)$ is a loss function. Robust loss functions, such as the Huber loss \citep{sun2020adaptive}, should be implemented when the data has heavy-tail distributions.
	If $ R(\wh f_m)< R(\wh f_j)$ for all $j\neq m$,  $f_m$ is selected as the best model. To estimate $R(\wh f_m)$, the V-fold cross-validation begins with splitting the data index $\calI$ into $V$ subsets $\calI_1, \dots, \calI_V$.  For each $\calI_v$, $f_m$ is trained on the data that excludes  $\calI_v$, denoted as $\wh f_m^{(v)}$. Then the risk is evaluated based on the data in $\calI_v$. With a little abuse of notation, we write $R(\wh f_m^{(v)}) = |\calI_v|^{-1} \sum_{i\in\calI_v} l( \wh f_m^{(v)}(X_i), Y_i)$. Then the estimated risk $R(\wh f_m)$ can be obtained by 
	\beqrs
	R(\wh f_m) = \frac{1}{V}\sum_{v=1}^{V}R(\wh f_m^{(v)}).
	\eeqrs
	Existing cross-validation methods focus on comparing $R(\wh f_m)$ to decide the best candidate model, which is, essentially, comparing the prediction accuracy of all the candidate models with sample size $(1-1/V)n$. However, due to the integral nature of risk functions, cross-validation requires moment conditions on the loss function, which in turn implies certain moment conditions on the data distribution. For example, we require the existence of the second-order moments of $\eps$ when using a least square loss function. And the existence of the first-order moment of $\eps$ is necessary when using the Huber loss function. In other words, the risk function could become impractical in model evaluation when the data is heavy-tailed. This motivates us to consider weaker conditions for model selection in the heavy-tailed setting.
	
	\subsection{Robust cross-validation and conformal inference}
	
	The proposed approach is inspired by recent development in conformal inference and conformal prediction \citep{hu2020distribution, lei2018distribution}. We first consider a simple sample splitting scenario, where the index set $\calI = \{1, \dots, 2n\}$ is randomly split into two equal subsets $\calI_1$ and $\calI_2$, each with cardinality $n$. Let $D_{1} = \{(X_i, Y_i), i\in\calI_1\}$ and $D_{2} = \{(X_i, Y_i), i\in\calI_2\}$ be the two subsets of the data. We train all the candidate models based on $D_1$ and evaluate the fitted models on $D_2$. Specifically, let $\{\wh f_m: m\in\mM\}$ be the set of candidate models trained on $D_1$. 
	
	When two candidate models perform similarly to each other, we propose to treat them as {\it conformal} enough so that inference methods can not distinguish one from each other. In other words, the loss of the two candidate models tends to be exchangeable. Specifically, if $\wh f_m$ and $\wh f_j$ perform similarly on the testing set, then the conformal prediction theory implies that the ranking statistic
	\beqr\label{eq: conformal_U}
	\wt U_{mj} \defby \frac{1}{n}\sum_{k\in\calI_2}\indic\left\{l(\wh f_m(X_k), Y_k) <  l(\wh f_j(\wt X), \wt Y)\right\}
	\eeqr
	has an approximately $U(0,1)$ distribution. $(\wt X, \wt Y)$ is an additional random draw from the distribution. Thus in order to compare $\wh f_m$ and $\wh f_j$, it suffices to test whether $\wt U_{mj}$ follows a uniform distribution. If $\wt U_{mj}$ is significantly larger than 0.5 for all $j$, then $\wh f_m$ is a better choice compared to $\wh f_j$. In fact, $\wt U_{mj} $ can be interpreted as the conformal $p$-value of testing the null hypothesis that $\wh f_j$ performs similarly as $\wh f_m$, based on one single point $(\wt X, \wt Y)$.	Because the conformal $p$-value is a ranking statistic, it bypasses the integral nature of the risk function and provides a robust model comparison criterion, especially for heavy-tailed data.
	
	However, the power of the statistic (\ref{eq: conformal_U}) can be severely limited in practice because it is only based on one single evaluation of $\wh f_j$. To achieve asymptotically full power, we consider testing on multiple data points and aggregating the ranking statistics. Specifically, we define a generalized rank-sum statistic
	\beqr\label{eq: rank_sum_stat}
	\wh U_{mj} = \frac{1}{n^2}\sum_{k, l\in\calI_2} \indic\{ l(\wh f_m(X_k), Y_k) <  l(\wh f_j(X_l), Y_l)\},
	\eeqr
	where both $\wh f_m$ and $\wh f_j$ are evaluated on the entire testing dataset. Note that the two samples being compared in the indicator function in equation (\ref{eq: rank_sum_stat}) are highly dependent on each other, which marks a clear distinction from the classical rank-sum statistic and results in more complicated analysis for the asymptotic variance. 
	
	
	\begin{remark}
		Without loss of generality, we assume the value of the loss function follows a continuous distribution. These ties will occur with probability zero in the comparison (\ref{eq: rank_sum_stat}). When the loss functions have discrete values, we use a random tie-breaking when calculating the rank-sum statistics. The theory follows similarly with careful bookkeeping. 
	\end{remark}

	\subsection{Rank-sum based robust model selection}
	
	To achieve model selection, we propose to simultaneously test whether the aggregated statistics $\wh U_{mj}$ are significantly larger than 0.5 for all $j\in\calM$ and $j\neq m$. This is essentially a high dimensional testing problem for two sample U-statistics, especially when the number of candidate models $|\calM|$ is large. Define 
	\beqrs
	\mu_{m,j} = \P\left\{ l(\wh f_m(X), Y) <  l(\wh f_j(X'), Y')\mid \wh f_m, \wh f_j \right\} - 0.5,
	\eeqrs
	We consider the hypothesis testing problem:
	\beqr\label{RSR_SS}
	H_{0,m} : \min_{j\neq m} \mu_{m,j} \geq 0\quad \mbox{v.s.}\quad H_{1,m}:  \min_{j\neq m} \mu_{m,j} < 0.
	\eeqr
	The intuition behind the hypothesis test is that the best model should always have a smaller predictive loss compared to other models. Thus if $\wh f_m$ is an ideal model, $\mu_{m,j}$ should always be non-less than 0, and the estimated $p$-value $\wh p_{ss, m}$ should be larger than the type-I error rate $\alpha$. 
	After performing the hypothesis testing (\ref{RSR_SS}) for each $m\in\calM$, we output the selected confidence set under the sample splitting (ss) setting:
	\beqrs
	{\wh\calM}_{ss} = \{\wh f_m: m\in\calM, \wh p_{ss, m}\geq\alpha \}.
	\eeqrs
	One may switch the role of $H_0$ and $H_1$ in testing (\ref{RSR_SS}), and output the selected confidence set as $\{\wh f_m: m\in\calM, \wh p_{ss, m}\leq\alpha \}$.  This leads to a different interpretation of the selected sets and a balance of type-I and type-II errors. When missing a potentially good model in the confidence set is more severe, we treat it as the type-I error and formulate the hypothesis tests as in (\ref{RSR_SS}). 
	
	Because the proposed method is a {\bf R}ank-{\bf S}um based {\bf R}obust approach, we named it RSR for convenience. RSR shares a similar inference-based model selection idea as the method proposed in \cite{lei2020cross}. However, the method in \cite{lei2020cross} is based on the risk functions and is not applicable for heavy-tailed data. Besides, the test statistic (\ref{eq: conformal_U}) are two sample U-statistics, and no existing techniques can be applied to obtain the $p$-values of the high dimensional testing problem (\ref{RSR_SS}). Another robust choice of comparision is the pairwise criterion $\P\left\{ l(\wh f_m(X), Y) <  l(\wh f_j(X), Y)\mid \wh f_m, \wh f_j \right\}$. However, we show in the numerical studies that it does not perform as well as the RSR under the heavy-tailed setting.
	
	\subsection{The Gaussian multiplier bootstrap algorithm}
	
	The hypothesis testing problem (\ref{RSR_SS}) is challenging from many perspectives. Firstly, the number of candidate models $|\calM|$ can be of high dimension, and it is hard to approximate the minimum of a high dimensional vector. Secondly, because all models are trained and tested on the same observational data, the estimated losses are highly dependent on each other, with unknown dependency structures. Thirdly, the rank-sum statistic is a two-sample U-statistics with two dependent samples, and is more complicated to analyze compared to the sample mean. To solve these challenges, we develop a new approximation theory for the minimum of high dimensional two-sample U-statistics, with detailed theories shown in the following subsection. To estimate the $p$-values for (\ref{RSR_SS}) with minimum conditions on the data and the candidate models, we propose using the Gaussian multiplier bootstrap, as illustrated below.
	
	Define the elements in the rank-sum statistic
	\beqr\label{eq:xi_ss}
	\xi_{m,j}^{(k,l)} = \indic\left\{l(\wh f_m(X_k), Y_k) <  l(\wh f_j(X_l), Y_l)\right\}-0.5,
	\eeqr
	where $j\in\calM$ and $k,l\in\calI_2$.
	The goal is to study the distribution of
	\beqrs
	\min_{j\neq m}\frac{1}{n^2}\sum_{k\in\calI_2}\sum_{l\in\calI_2}	\xi_{m,j}^{(k,l)}. 
	\eeqrs

	\begin{itemize}
		\item[1.] Randomly split the index $\calI = \{1, \dots, 2n\}$ into two equal subsets $\calI_1$ and $\calI_2$. Train the candidate models on $i\in\calI_1$, and obtain $\{\wh f_j, j\in\calM\}$.
		\item[2.] For each $j\neq m$, estimate $\wh\mu_{m,j} = n^{-3/2}\sum_{k\in\calI_2}\sum_{l\in\calI_2}	\xi_{m,j}^{(k,l)} $, with $	\xi_{m,j}^{(k,l)}$ given by (\ref{eq:xi_ss}).  Let $T = \min_{j\neq m}\wh\mu_{m,j}$.
		\item[3.] For $b=1,\dots, B$,
		\begin{itemize}
			\item[(a)] Generate i.i.d. standard Gaussian random variables $e_k$.
			\item[(b)] Let 
			\beqrs
			T_b = \min_{j\neq m} \frac{1}{\sqrt n}\sum_{k\in\calI_2}\left[\frac{1}{n}\sum_{j\in\calI_2}\xi_{m,j}^{(k,l)} - \wh\mu_{m,j}  \right] e_k.
			\eeqrs
		\end{itemize}
		\item[4.] Obtain $p$-value: $\wh p_{ss, m} = B^{-1}\sum_{b=1}^{B}\indic(T_b<T)$.
	\end{itemize}
	
	The Gaussian multiplier bootstrap is essentially approximating the distribution of the high dimensional vector $(\mu_{m,j}, j\neq m)$. Thus the dimensionality and dependency are automatically taken into account. The algorithm can be easily extended to $V$-fold cross-validation and is summarized below.
	
	\begin{itemize}
		\item[1.] Randomly split the index $\calI = \{1, \dots, 2n\}$ into $V$ subsets $\calI_1, \dots, \calI_V$. For each $1\leq v\leq V$, train the candidate models on $i\notin\calI_v$, and obtain $\{\wh f_j^{(v)}, j\in\calM\}$.
		\item[2.] For $k,l\in\calI$ and $j\neq m$, calculate $$\xi_{m,j}^{(k,l)} = \indic\left\{l(\wh f^{(v_1)}_m(X_k), Y_k) <  l(\wh f^{(v_2)}_j(X_l), Y_l)\right\}-0.5.$$
		where $k\in\calI_{v_1}$ and $l\in\calI_{v_2}$.
		\item[3.] For each $j\neq m$, estimate $\wh\mu_{m,j} = (2n)^{-3/2}\sum_{k\in\calI}\sum_{j\in\calI}	\xi_{m,j}^{(k,l)} $. Let $T = \min_{j\neq m}\wh\mu_{m,j}$.
		\item[4.] For $b=1,\dots, B$,
		\begin{itemize}
			\item[(a)] Generate i.i.d. standard Gaussian random variables $e_k$.
			\item[(b)] Let 
			\beqrs
			T_b = \min_{j\neq m} \frac{1}{\sqrt{2n}}\sum_{k\in\calI_2}\left[\frac{1}{2n}\sum_{j\in\calI_2}\xi_{m,j}^{(k,l)} - \wh\mu_{m,j}  \right] e_k.
			\eeqrs
		\end{itemize}
		\item[5.] Obtain $p$-value: $\wh p_{V, m} = B^{-1}\sum_{b=1}^{B}\indic(T_b<T)$.
	\end{itemize}
	
	\subsection{Screening when the number of candidate models is large}\label{sec: screen}
	
	The computation cost can be expensive when the number of candidate models is huge. Motivated by the moment inequalities in \cite{chernozhukov2019inference}, we implement a screening procedure to eliminate the models that are clearly inferior to $f_m$. Define
	\beqr\label{screening}
	\wh J_m =  \Big\{j\neq m: \sqrt{n}\frac{\wh\mu_{m,j}}{\wh\sigma_{m,j}}\leq 2 c_{\alpha'} \Big \}, \quad\mbox{where}\quad c_{\alpha'} = \Phi^{-1}\Big(1-\frac{\alpha'}{(M-1)^{1+s}}\Big).
	\eeqr
	Here $s>0$ is a small constant tunning parameter and is mainly used for the convenience of theoretical analysis. In all the numerical studies, we set $s=0.01$. $\Phi$ is the distribution function of the standard normal distribution. Clearly, the models in $J_m^c$ have relatively large $\wh\mu_{m,j}$ and are inferior compared to $f_m$; thus there is no need to invoke the bootstrap comparison for the models eliminated out of $\wh J_m$.
	
	The screening set in (\ref{screening}) is similar to the one in formula (25) of \cite{chernozhukov2019inference}. However, the original method in \cite{chernozhukov2019inference} is only applicable to mean estimation, while our $\wh\mu_{m,j}$ are two sample U-statistics. We prove in the next section that the screening set can contain the optimal candidate model with high probability.
	
	Because the rank-sum statistic is comparing two dependent samples, $\wh\sigma_{m,j}$ is no longer equal to $1/6$ as in the classical settings. To estimate $\wh\sigma_{m,j}$, we define extra notation to avoid redundant parentheses in the equations. Specifically, let $A_k = l(\wh f_m(X_k), Y_k)$ and $B_{j, l} = l(\wh f_j(X_l), Y_l)$. We drop the notation on its dependence on $m$ for simplicity. Then 
	\beqrs
	\wh\mu_{m, j} = \frac{1}{n^2}\sum_{k\in\calI_2}\sum_{l\in\calI_2}	\indic\left\{A_k<B_{j, l} \right\}-0.5,
	\eeqrs
	with its projection given by $n^{-1}\sum_{i=1}^{n}\Big\{  F_{1}(B_{j,i})+ 1 -  F_{2,j}(A_i)\Big\}$, where $F_{1}$ and $F_{2, j}$ are the cumulative distribution function of $A_k$ and $B_{j, l}$, respectively. 
	Thus the variance and its empirical estimator are 
	\beqr\label{eq: sigma}
	&&\sigma_{m, j}^2 = \frac{1}{6n} - 2 \cov\{F_{1}(B_1),  F_{2,j}(A_1)\},\\\label{eq: sigma_hat}
	&&\wh\sigma_{m, j}^2 =  \frac{1}{6n} - \frac{2}{n}\Big( \sum_{i=1}^{n}\wh F_{1}(B_i) \wh F_{2, j}(A_i)  - \sum_{i=1}^{n} \wh F_{1}(B_i) \sum_{i=1}^{n}\wh F_{2,j}(A_i)     \Big),
	\eeqr	
	where $\wh F_{1}$ and $\wh F_{2,j}$ are the empirical estimator of $F_{1}$ and $F_{2,j}$.

	\subsection{Consistency analysis}
	
	In this paper, we prove the consistency of RSR under the sample splitting scenario. The analysis was conducted by conditioning on the fitted candidate models $\{\wh f_n, m\in\calM\}$. Distinguished from the traditional cross-validation or cross-validation with confidence \citep{lei2020cross}, our method does not require the moment assumptions or tail behaviors of the prediction loss. To simplify technical details, our analysis of the $\wh p_{ss,m}$ assumes that the number of bootstraps $B$ is large enough to ignore the bootstrap variability. 
	
	\begin{theorem}\label{thm: ss_consistency}
		Assume that $ \min_{j\neq m} \mu_{m,j} \geq 0$, and set the type-I error in testing (\ref{RSR_SS}) to be $\alpha$. Then conditional on the fitted candidate models  $\{\wh f_n, m\in\calM\}$, 
		\beqrs
		\P(\wh p_{ss,m}\leq\alpha) \leq \alpha +C_1 (\log^{5/6}N) n^{-1/6}+  C  n^{-K_1}(\log p)^7+ N^{-1}.
		\eeqrs
		$C$, $C_1$ are constants. $N = \max\{n, M-1\}$, where $M = |\calM|$ is the size of candidate models.
	\end{theorem}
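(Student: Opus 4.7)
The plan is to reduce the event $\{\wh p_{ss,m}\le\alpha\}$ to a statement about the minimum of a high-dimensional normalized vector of two-sample $U$-statistics, and then apply the Gaussian approximation / multiplier bootstrap theory announced in the introduction. Conditional on the fitted models, treat $\{(A_i, B_{j,i}): i\in\calI_2, j\in\calM\}$ as i.i.d.\ across $i$ and decompose each coordinate through a Hoeffding (Hájek) expansion
\beqrs
\wh\mu_{m,j} - \mu_{m,j} = L_{m,j} + R_{m,j}, \qquad L_{m,j} = \frac{1}{n}\sum_{i\in\calI_2}\big[F_1(B_{j,i}) - F_{2,j}(A_i) - 2\mu_{m,j}\big],
\eeqrs
where $L_{m,j}$ is a centered sum of i.i.d.\ bounded random variables and $R_{m,j}$ is the quadratic degenerate remainder. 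Under $H_{0,m}$, $\mu_{m,j}\ge 0$ for all $j\ne m$, so $\sqrt n\,T \ge \min_{j\ne m}\sqrt n(L_{m,j} + R_{m,j})$.

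For the linear part I invoke the Gaussian approximation for high-dimensional two-sample $U$-statistics developed earlier in the paper: uniformly over $t\in\reals$,
\beqrs
\Big|\P\big(\min_{j\ne m}\sqrt n\,L_{m,j}\le t\big) - \P\big(\min_{j\ne m}Z_j\le t\big)\Big| \le C_1(\log N)^{5/6}n^{-1/6},
\eeqrs
where $(Z_j)_{j\ne m}$ is centered Gaussian with the Hájek covariance given by (\ref{eq: sigma}). A parallel Kolmogorov bound of the same order is then needed for the conditional law of the bootstrap statistic $T_b$: conditional on the data, $T_b$ is the minimum of a Gaussian vector whose covariance equals the empirical Hájek covariance, so a Gaussian--Gaussian comparison inequality measured in entrywise covariance distance, combined with uniform concentration of $\wh\sigma_{m,j}^2$ from (\ref{eq: sigma_hat}) via Dvoretzky--Kiefer--Wolfowitz / Bernstein-type bounds on the empirical CDFs $\wh F_1, \wh F_{2,j}$, yields the same rate. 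Nazarov's anti-concentration inequality for the Gaussian minimum then converts these two Kolmogorov bounds into the $\alpha + C_1(\log^{5/6}N)n^{-1/6}$ control of $\P(\wh p_{ss,m}\le\alpha)$.

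The remainder $R_{m,j}$ is a degenerate two-sample $U$-statistic with bounded kernel; a maximal Hoeffding--Arcones inequality for degenerate $U$-statistics combined with a Bonferroni bound over the $M-1$ coordinates produces $\sup_{j\ne m}|R_{m,j}|\le n^{-K_1}(\log p)^7$ with overwhelming probability, which supplies the second summand. To keep the effective dimension manageable, I invoke the screening step of Section~\ref{sec: screen}: combining variance estimator consistency with a Bonferroni bound and the Gaussian tail threshold $c_{\alpha'}\asymp\sqrt{\log N}$ from (\ref{screening}), the set $\wh J_m$ retains every active index with probability at least $1-N^{-1}$, giving the final term. The main obstacle is the Gaussian approximation and bootstrap consistency step: the two arguments of the indicator kernel are evaluated on the same sample, so the Hájek projection contains a nonnegligible cross-covariance $\cov(F_1(B_{j,i}), F_{2,j}(A_i))$ that must be estimated and compared uniformly over the $M-1$ coordinates in order for the multiplier bootstrap to reproduce the true Gaussian limit at the sharp CCK rate $n^{-1/6}(\log N)^{5/6}$.
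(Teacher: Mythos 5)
Your overall architecture is the paper's: use $\mu_{m,j}\ge 0$ to reduce $T$ to the centered $U$-statistic vector, Gaussian-approximate its minimum, and absorb the discrepancy between the bootstrap quantile (built from an estimated covariance) and the true Gaussian quantile through a Gaussian--Gaussian comparison. The paper does exactly this by applying Theorem \ref{thm: gauss_approxi} to the full vector and Theorem 2 of \cite{chernozhukov2015comparison} to get $z(\alpha,\wh\Gamma_g)\le z(\alpha+C_1(\log^{5/6}N)n^{-1/6},\Gamma_g)$ on a good event. However, two of your attributions are wrong, and one is a step that would fail. The $N^{-1}$ term has nothing to do with screening: Theorem \ref{thm: ss_consistency} analyzes the plain sample-splitting bootstrap, screening belongs to Theorem \ref{thm: screening_consistency}, and invoking it here would change the statistic to a minimum over $\wh J_m$, requiring the separate arguments of that theorem (that $J_m\subset\wh J_m$ and that shrinking the index set only raises the quantile). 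In the paper, $N^{-1}$ is the probability of the bad event $\{\max_{j,j'}|\wh\Gamma_{g,(j,j')}-\Gamma_{g,(j,j')}|> C\sqrt{\log(N)/n}\}$, obtained by bounded differences plus a union bound over the entries of $\wh\Gamma_g$ in (\ref{eq: gamma_hat}); it is exactly the complement of the event on which your Gaussian--Gaussian comparison is licensed, so it cannot be folded into the $n^{-1/6}$ term and must appear separately.

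Second, your handling of the degenerate remainder conflates a magnitude bound with a Kolmogorov-distance bound. The remainder satisfies $\E\max_j |W_j|\le C\log p/\sqrt n$ (the paper's Lemma \ref{bound_W}, via decoupling, Hoeffding, and Hanson--Wright), not $n^{-K_1}(\log p)^7$; the factor $(\log p)^7$ in the term $Cn^{-K_1}(\log p)^7$ comes from the CCK central limit theorem for the H\'ajek linear part (Lemma 2.3 of \cite{chernozhukov2013gaussian}) combined with the smoothing and anti-concentration argument that converts the expected maximum of the remainder into a distributional error --- all packaged inside Theorem \ref{thm: gauss_approxi}, which is the single source of that summand in the paper. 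These are bookkeeping errors rather than a failure of the method, but as written your proof would not produce the stated bound from the stated sources.
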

	
	Theorem \ref{thm: ss_consistency} shows that by specifying a  type-I error rate $\alpha$, the proposed method can include the best candidate models with probability no less than $1-\alpha$ when the sample size is large enough. The error control also depends on the number of models $M$. Next, we prove the consistency of the screening procedure discussed in section \ref{sec: screen}. 
	
	\begin{theorem}\label{thm: screening_consistency}
		If the confidence set is constructed by first implementing the screening step as in (\ref{screening}). Assume the same conditions as in Theorem \ref{thm: ss_consistency}, $s>0$, $M = C_1\exp(n^{C_2})$ and $C_2<1/2$, then
		\beqrs
		\P(m\in\wh\calM_{ss}) \geq 1- \alpha  + o(1).
		\eeqrs
	\end{theorem}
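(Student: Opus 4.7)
The plan is to reduce the screened problem to the unscreened one of Theorem \ref{thm: ss_consistency}. Conditional on the fits, define the oracle active set $J^*_m = \{j \neq m : \sqrt n\, \mu_{m,j}/\sigma_{m,j} \leq c_{\alpha'}\}$ and the event $\mathcal{E} = \{J^*_m \subseteq \wh J_m\}$. On $\mathcal{E}$, any $j \in \wh J_m \setminus J^*_m$ satisfies $\sqrt n\, \mu_{m,j}/\sigma_{m,j} > c_{\alpha'}$, which strictly exceeds the uniform order of $\sqrt n(\wh\mu_{m,j} - \mu_{m,j})/\sigma_{m,j}$ and of the bootstrap multiplier process, both of which are $O_p(\sqrt{\log M})$ by the Gaussian approximation theory developed in the excerpt. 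Consequently both the test statistic $T$ and every bootstrap replicate $T_b$ are attained at indices in $J^*_m$ with probability $1 - o(1)$, so the screened test coincides with the unscreened test restricted to $J^*_m \cup \{m\}$. Applying Theorem \ref{thm: ss_consistency} to this reduced family (still of size at most $M$) yields $\P(\wh p_{ss,m} < \alpha \mid \mathcal{E}) \leq \alpha + o(1)$, since the error terms inherited from Theorem \ref{thm: ss_consistency} remain $o(1)$ under $\log M = n^{C_2}$ with $C_2 < 1/2$.

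It remains to prove $\P(\mathcal{E}) = 1 - o(1)$. For each $j \in J^*_m$, screening can fail only if $\sqrt n\, \wh\mu_{m,j}/\wh\sigma_{m,j} > 2 c_{\alpha'}$, so it suffices to control $\sqrt n(\wh\mu_{m,j} - \mu_{m,j})/\sigma_{m,j}$ and the variance ratio $\wh\sigma_{m,j}/\sigma_{m,j}$ uniformly in $j$. A Hoeffding decomposition of the two-sample U-statistic $\wh\mu_{m,j}$ gives a bounded linear projection plus a quadratic remainder of order $n^{-1}$, and Hoeffding's inequality applied to the bounded linear part yields an exponential deviation bound of order $\exp(-c\, c_{\alpha'}^2)$ at each $j$. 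Since $c_{\alpha'}^2 \asymp (1+s)\log(M-1)$ and $M = C_1\exp(n^{C_2})$, a union bound over at most $M - 1$ indices leaves a residual probability $O((M-1)^{-s}) = o(1)$ thanks to $s > 0$. For the variance ratio, an empirical-process bound on $\wh F_1$ and on the collection $\{\wh F_{2,j}\}_{j \neq m}$, inserted into (\ref{eq: sigma_hat}), delivers $\max_{j \neq m} |\wh\sigma_{m,j}/\sigma_{m,j} - 1| = o(1)$ at a fast enough rate.

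The main obstacle is this uniform variance control. Unlike the classical rank-sum setting, $\sigma_{m,j}^2$ in (\ref{eq: sigma}) involves the unknown cross-covariance $\cov\{F_1(B_1), F_{2,j}(A_1)\}$ between the two dependent samples drawn from $\calI_2$, and every $j$ shares the same $\calI_2$, so one cannot naively union-bound one-at-a-time concentration inequalities. I expect to rely on standard empirical process machinery over the function classes induced by the fitted losses to obtain a uniform-in-$j$ covariance estimate; the restriction $C_2 < 1/2$ enters exactly at this step, ensuring $c_{\alpha'}/\sqrt n \to 0$ with enough margin that both the mean- and variance-estimation errors are dominated by the $c_{\alpha'}$ slack built into the screening threshold.
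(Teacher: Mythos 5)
Your overall architecture matches the paper's: define an oracle set $J_m$ of indices whose standardized population means lie below $c_{\alpha'}$, show (a) that $\wh J_m \supseteq J_m$ with probability $1-o(1)$ and (b) that indices outside $J_m$ cannot trigger rejection, then invoke Theorem \ref{thm: ss_consistency} on the reduced family. Your concentration computation ($c_{\alpha'}^2 \asymp (1+s)\log (M-1)$, exponential tail for $\sqrt{n}(\wh\mu_{m,j}-\mu_{m,j})$, union bound leaving a residual of order $(M-1)^{-s}$) is exactly the paper's Steps 1 and 2, and the uniform variance control you flag as the main obstacle is precisely what Lemma \ref{lem: sigma_bound} supplies via a Fuk--Nagaev inequality, with $C_2<1/2$ entering at that point much as you anticipate.

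The one step that does not survive scrutiny is the claim that \emph{every bootstrap replicate} $T_b$ is attained at indices in $J^*_m$ with probability $1-o(1)$. The multiplier process is centered coordinate by coordinate, so coordinates outside $J^*_m$ contribute mean-zero fluctuations of the same magnitude as those inside; the minimum of the bootstrap process is in no sense localized on $J^*_m$, and restricting the bootstrap to $\wh J_m$ makes the critical value \emph{larger} (less negative) than the unscreened one --- exactly the direction that threatens to inflate the type-I error, which is why the theorem needs a proof at all. The paper never claims such localization; it instead uses the monotonicity of min-quantiles under coordinate restriction, $z(\alpha,\wh\Gamma)\leq z(\alpha,\wh\Gamma_{J_m})$, so that on the containment event
$\P\big(\min_{j\in\wh J_m}\wh\mu_{m,j}\leq z(\alpha,\wh\Gamma)\big)\leq \P\big(\min_{j\in J_m}\wh\mu_{m,j}\leq z(\alpha,\wh\Gamma_{J_m})\big)\leq \alpha+o(1)$
by Theorem \ref{thm: ss_consistency} applied to the family $J_m$. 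Substituting this monotonicity step for your localization claim repairs the argument. Two smaller omissions: the case $J^*_m=\emptyset$ needs its own sentence (the paper observes that then all $\wh\mu_{m,j}\geq 0$ with high probability, so the test cannot reject), and the uniform-in-$j$ variance estimate is asserted rather than proved, whereas it carries real weight here because every $\wh\sigma_{m,j}$ is built from the same sample $\calI_2$.
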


	The consistency of V-fold RSR can also be proved following the Stein's method based technique as the recent developments in \cite{kissel2022high, austern2020asymptotics}. However, such an extension is primarily technical in nature and detracts from the main objective of this paper, which is to introduce a robust model selection methodology for heavy-tailed data in practice. Thus we defer the theoretical discussion to a future work. Intuitively, when the sample size is large enough, the training step can estimate the candidate models accurately enough, and the dependence on the randomness of the training data will become weak. In this paper, we numerically verify that the $V$-fold RSR is applicable and performs better than the sample splitting algorithm. We set the value of V to 5 in all the examples analyzed.
	
	\section{Gaussian Approximation for High Dimensional Two Sample U-Statistics }
	
	In this section, we provide the Gaussian approximation theory for high-dimensional two-sample U-statistics.	Let $U^n = \{U_1, \dots, U_n\}$ and $V^n = \{V_1, \dots, V_n\}$ be two samples observations, which are assumed to be independent and identically distributed drawn from their respective distributions in $\mR^p$. Specifically, $U_i = (U_{i1}, \dots, U_{ip} )\trans$, $V_i = (V_{i1}, \dots, V_{ip})\trans$. $U_i$ and $V_i$ may be dependent.
	Let $h$ be a mapping from the domain of $U\times V$ to $\mR^p$. Define the $p$-dimensional vector of U-statistics,
	\beqrs
	T =  \sqrt n\Big( \frac{1}{n(n-1)}\sum_{1\leq k\neq l\leq n}h(U_{k}, V_{l}) - \E h\Big).
	\eeqrs
	To write the U-statistics more explicitly, we note that the $j$-th element of $T$ is given by $T_j$, which is defined as
	\beqrs
	T_j = \sqrt n\Big( \frac{1}{n(n-1)}\sum_{1\leq k\neq l\leq n}h_j(U_{kj}, V_{lj}) - \E h_j\Big).
	\eeqrs
	where $h_j: \mR\times\mR\rightarrow\mR$ is the kernel of a U-statistics.
	We also calculate the projections of the U-statistics. Let
	\beqr\nonumber
	g_{1j}(u) &=& \E h_j(u, V_{1j}) - \E h_j(U_{1j}, V_{1j}), \\\nonumber
	g_{2j}(v) &=& \E h_j(U_{1j}, v) - \E h_j(U_{1j}, V_{1j}), \\\label{eq: canonical}
	f_j(u, v) &=& h_j(u, v) - \E h_j(u, V_{1j})- \E h_j(U_{1j}, v) +\E h_j(U_{1j}, V_{1j}). 
	\eeqr
	And define 
	\beqrs
	L_j = \frac{1}{\sqrt n}\sum_{i=1}^{n}\left\{  g_{1j}(U_{ij}) + g_{2j}(V_{ij})\right\}, \quad W_j = \frac{\sqrt n}{n(n-1)}\sum_{1\leq k\neq l\leq n} f_j(U_{kj}, V_{lj}).
	\eeqrs
	Clearly, $T_j = L_j + W_j$. Classical U-statistics theory implies that for each $j =1,\dots, p$, $T_j\approx L_j$. Let
	$T = (T_1,\dots, T_p)\trans$, $L = (L_1,\dots, L_p)\trans$, $g_1(U_i) = (g_{11}(U_{i1}), \dots, g_{1p}(U_{ip}))\trans$, and $g_2(V_i) = (g_{21}(V_{i1}), \dots, g_{2p}(V_{ip}))\trans$. We can also define similar $p$-dimensional vectors for functions $f(\cdot)$ and $h(\cdot)$, where the $j$-th element is given by $f_j(\cdot)$ and $h_j(\cdot)$, respectively. Denote the vectorized $L$ and $W$ as
	\beqrs
	L= \frac{1}{\sqrt n}\sum_{i=1}^{n}\left\{  g_{1}(U_{i}) + g_{2}(V_{i})\right\}, \quad W = \frac{\sqrt n}{n(n-1)}\sum_{1\leq k\neq l\leq n} f(U_{k}, V_{l}),
	\eeqrs
	\cite{chernozhukov2017central} shows that the empirical distribution of $L$ behaves similarly as $Z\sim N(0, \Gamma_g)$, where $\Gamma_g$ is the positive definite covariance matrix of $ g_{1}(U_{i}) + g_{2}(V_{i})$. For a vector $\gamma$, let $\gamma^{\otimes2} = \gamma\gamma\trans$. Then we have
	\beqrs
	\Gamma_g =\E \left\{g_{1}(U_{1})^{\otimes2} + g_{2}(V_{1})^{\otimes2}  + g_{1}(U_{1}) g_{2}(V_{1})\trans +  g_{2}(V_{1}) g_{1}(U_{1})\trans\right\}.
	\eeqrs
	And its empirical estimator is
	\beqr\label{eq: gamma_hat}
	\wh \Gamma_g  = \frac{1}{n}\sum_{i=1}^{n}\left\{\wh g_{1}(U_{i})^{\otimes2} + \wh g_{2}(V_{i})^{\otimes2}  + \wh g_{1}(U_{i}) \wh g_{2}(V_{i})\trans +  \wh g_{2}(V_{i}) \wh g_{1}(U_{i})\trans\right\}, 
	\eeqr
	where
	\beqrs
	\wh g_1(u) &=& \frac{1}{n}\sum_{i=1}^{n}h(u, V_i) - \frac{1}{n^2}\sum_{i=1}^{n}\sum_{j=1}^{n}h(U_i, V_j),\\ 
	\wh g_2(v) &=& \frac{1}{n}\sum_{i=1}^{n}h(U_i, v) - \frac{1}{n^2}\sum_{i=1}^{n}\sum_{j=1}^{n}h(U_i, V_j).\\ 
	\eeqrs
	
	Define the vector norm $\|\cdot\|$ to be the maximum value of the vector. 
	\begin{theorem}\label{thm: gauss_approxi}
		Assume that the kernel function of the U-statistics is uniformly bounded, i.e.,  $|h_j|\leq B$, $j=1,\dots, d$. Let $Z\sim N(0, \Gamma_g)$, then
		\beqrs
		\sup_{t\in\mR}\big|\P(\|T\|\leq t)  - \P(\|Z\|\leq t)\big|\leq C  n^{-K_1}(\log p)^7,
		\eeqrs	
		where $C$ and $K_1 $ are positive constants specified in the proof.
	\end{theorem}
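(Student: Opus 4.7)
The plan is to use the Hoeffding-type decomposition $T = L + W$ already displayed in the excerpt (with $L$ the linear Hajek projection and $W$ the completely degenerate remainder) and then to combine three ingredients: (a) the Chernozhukov--Chetverikov--Kato Gaussian approximation for maxima of i.i.d.\ bounded sums applied to $L$, (b) a maximal moment inequality showing that $\|W\|$ is negligible, and (c) Nazarov-type anti-concentration to absorb $W$ into the Gaussian envelope. This is essentially the template of Chen (2018) for one-sample U-statistics, but pushed to the two-sample, possibly cross-dependent setting required here.

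First I would verify that $L = n^{-1/2}\sum_{i=1}^n X_i$ with $X_i := g_1(U_i) + g_2(V_i)$ i.i.d., mean zero, and $\cov(X_i) = \Gamma_g$. Since $|h_j|\le B$ implies $|g_{1j}|,|g_{2j}|\le 2B$, each $X_i$ satisfies $\|X_i\|\le 4B$ almost surely. Direct application of CCK (2017, Proposition 2.1) to maxima of sums of i.i.d.\ bounded vectors then yields $\sup_t|\P(\|L\|\le t)-\P(\|Z\|\le t)|\le C_0 n^{-K_0}(\log p)^7$ for $Z\sim N(0,\Gamma_g)$, which fixes the Gaussian target and the exponent of $\log p$ in the final rate.

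Next I would bound $\|W\|$. For $k\neq l$, $U_k$ and $V_l$ are independent because the pairs $(U_i,V_i)$ are i.i.d.\ across $i$, and by the definition (\ref{eq: canonical}) the kernel $f_j$ is canonical in each argument: $\E[f_j(u,V_{1j})]=0$ and $\E[f_j(U_{1j},v)]=0$ for every $u,v$. Hence $W_j$ is a completely degenerate two-sample U-statistic. A de la Pe\~na--Montgomery-Smith decoupling inequality replaces $(U_k,V_l)_{k\neq l}$ by $(U_k, V'_l)$ with $V'$ an independent copy of $V$, after which the Gin\'e--Lata\l a--Zinn / Arcones--Gin\'e moment bound for fully degenerate U-statistics gives $\E|W_j|^q\le C(q)B^q n^{-q/2}$ for $q\ge 2$. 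Taking $q\asymp\log p$ and a union bound yields $\E\|W\|\lesssim B(\log p)^{3/2}n^{-1/2}$, and Markov gives $\P(\|W\|>\epsilon_n)$ as small as needed. Combining via Nazarov's anti-concentration $\P(\|Z\|\in[t,t+\epsilon])\le C_2\epsilon\sqrt{\log p}$ with the sandwich $\P(\|T\|\le t)\le\P(\|L\|\le t+\epsilon_n)+\P(\|W\|>\epsilon_n)$ (and its symmetric lower counterpart), and tuning $\epsilon_n$ to balance the anti-concentration and tail terms against the CCK rate, produces the claimed bound $Cn^{-K_1}(\log p)^7$.

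The hard part will be the degenerate step. One-sample maximal inequalities for canonical U-statistics are classical, but the two-sample setup with possibly cross-dependent pairs $(U_i,V_i)$ blocks a direct application: the standard symmetrization acts in only one argument, and the pairwise dependence between the two samples must first be broken by decoupling. Tracking the $q$-dependence of constants carefully so that the optimal $q\asymp\log p$ produces the $(\log p)^{3/2}/\sqrt n$ scaling on $\E\|W\|$ (rather than a polynomial-in-$p$ blowup), and ensuring that the additional Rademacher symmetrization over the decoupled second sample does not inflate $B$ beyond an absolute constant, is the delicate piece. Once this moment bound is in place, the rest of the proof follows the standard CCK template and the exponent $7$ of $\log p$ in the stated rate is inherited directly from the linear-part approximation.
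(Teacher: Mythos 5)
Your proposal follows essentially the same route as the paper: the Hoeffding decomposition $T=L+W$, the CCK Gaussian approximation applied to the H\'ajek projection $L$, a decoupling-based argument showing the degenerate remainder satisfies $\E\|W\|_\infty \lesssim \mathrm{polylog}(p)/\sqrt{n}$, and anti-concentration to absorb it. The only (immaterial) technical substitutions are that the paper absorbs $W$ via the smooth-max/smoothed-indicator Taylor device and bounds $\E\max_j W_j$ through randomization plus conditional Hoeffding and Hanson--Wright, whereas you use a sandwich-plus-Markov argument and Arcones--Gin\'e-type moment bounds with $q\asymp\log p$; both yield the same final rate.
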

	
	Theorem \ref{thm: gauss_approxi} provides a general theory to approximate the maximum of high dimensional two-sample U-statistics, and its proof is given in the appendix. The minimum value of the high dimensional two sample U-statistics can be obtained similarly by simply taking a negative sign of the data in Theorem \ref{thm: gauss_approxi}. Although we assume that the kernel of the U-statistics is uniformly bounded, the proof can be extended to sub-Gaussian U-statistic kernels with careful bookkeeping on the tail analysis. The bounded kernel assumption simplifies the technical details and suffices the rank-sum statistics considered in the current paper.

	\section{Simulations}
	
	In this section, we illustrate the performance of the proposed method with simulated datasets. We compare the proposed RSR with the classical cross-validation (CV) and the CVC method proposed in \cite{lei2020cross}. We also consider a setting where we define $\mu_{m,j}$ as
	\beqr\label{eq: pcv}
	\P\left\{ l(\wh f_m(X), Y) <  l(\wh f_j(X), Y)\mid \wh f_m, \wh f_j \right\} - 0.5,
	\eeqr
	and also conduct a similar statistical inference based approach for model selection. Specifically, we estimate (\ref{eq: pcv}) by sample mean and implement a similar algorithm as in \cite{lei2020cross} to obtain its $p$-value. We name this method {\bf PCV} due to the {\bf p}airwise comparison with {\bf c}ross-{\bf v}alidation. 
	
	Two examples are considered: subset selection in the low dimensional linear model and tuning parameter selection for the Lasso in the high dimensional linear model. We use $t_k$ to denote $t$-distribution with $k$ degree of freedom. The simulations are repeated 100 times, and the bootstrap number $B$ is set to 500. We implement the 5-fold validation for all the methods and let $\alpha = 0.1$ for CVC, PCV, and RSR.
	
	\subsection{Case 1: robust subset selection in linear models}
	
	In this example, we demonstrate the model selection consistency of RSR in a low-dimensional linear regression with heavy-tailed data. The response variable is generated by  $Y=X\beta+\eps$, where $\eps\sim t_1$ and $X\sim t_k$, with $k=1,2,3$. We set the true model parameter as $\beta\trans = (1,0,3,4,0)$, whose first element denotes the intercept. As a result, only the second and third variable is truly related to the response.
	The candidate models for subset selection include all the 16 possible models that have the intercept term. 	Because both the design matrix and the error term are generated from heavy-tailed distributions, the ordinary least square method fails to estimate the parameters $\wh\beta$ accurately, and the comparison of the predicted loss becomes meaningless. Thus we implement the tunning free adaptive Huber regression  \citep{sun2020adaptive, wang2020new} as the regression algorithm to estimate $\wh\beta$. In other words, the Huber loss is implemented as the loss function. The method is implemented in the R package adaHuber. Four methods are compared: cross-validation, original CVC developed in \citep{lei2020cross}, PCV as described at the beginning of the simulation, and the proposed RSR.	We consider three settings, where each column of $X$ is generated from $t_1$, $t_2$, and $t_3$, respectively. In each setting,  the sample size $n$ increases from 40 to 320.
	
	We report two evaluation criteria in Figure \ref{subset1}:  the number of non-rejected models in the confidence set and the rate of correct selection, which calculates the percentage of times when the true model is selected in the confidence set. Additional similar results where $\alpha = 0.05$ is summarized in the appendix. The selected model size tends to decrease as the sample size $n$ increases, which shows the large sample property of the methods. The rate of correct selection for RSR is almost 1 in all three cases, while the other three methods all have a positive probability of missing the true model. As expected, RSR can consistently select the true model with an estimated probability tending to 1. CVC tends to select too many models and still fails to include the true model consistently. The confidence set by PCV is only slightly smaller compared to RSR. But PCV also seems to miss the true model with non-vanishing probability. 
	
	\begin{figure}
		\centering
		\begin{subfigure}[b]{0.45\textwidth}
			\centering
			\includegraphics[scale=0.3]{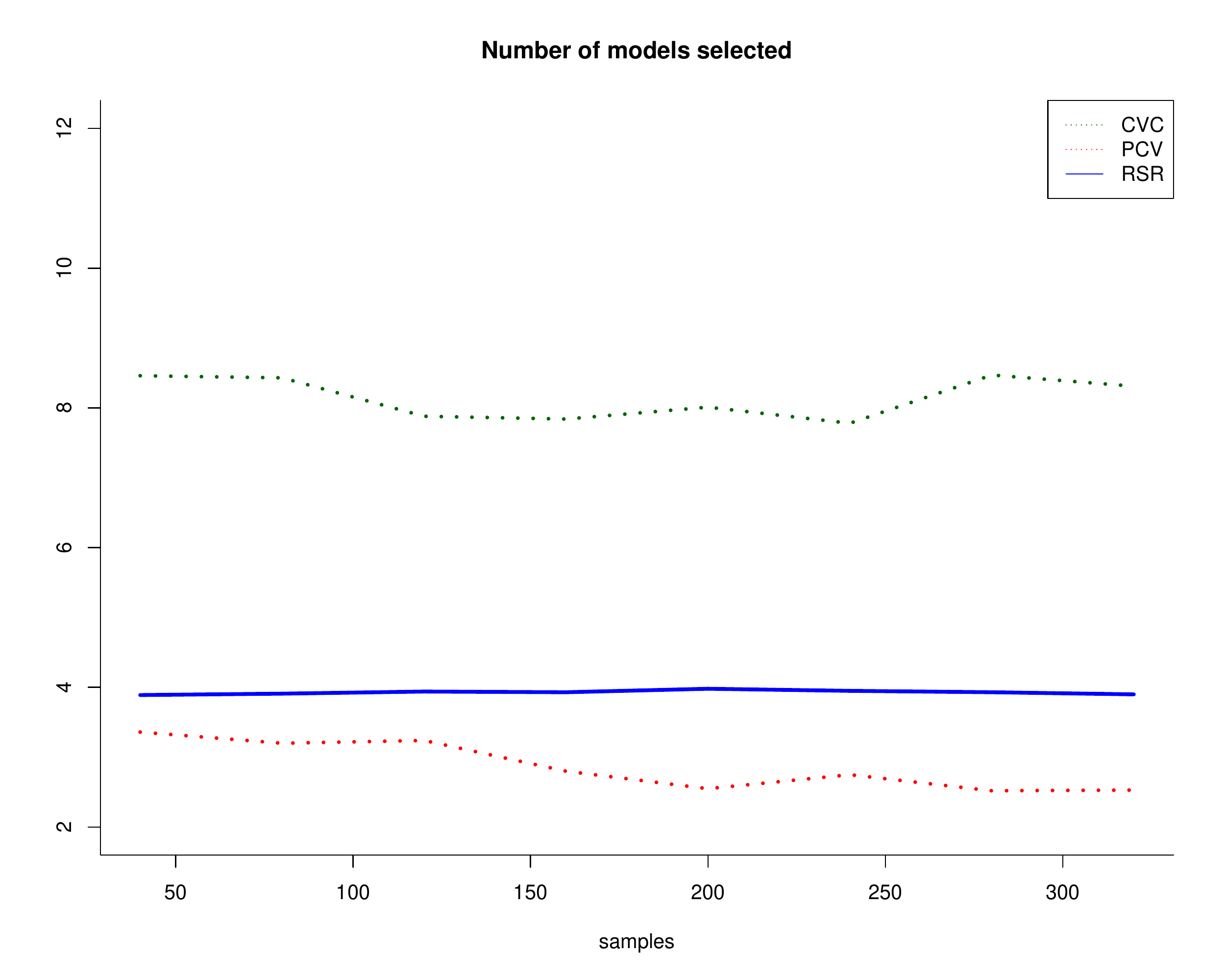}
		\end{subfigure}
		\hfill
		\begin{subfigure}[b]{0.45\textwidth}
			\centering
			\includegraphics[scale=0.3]{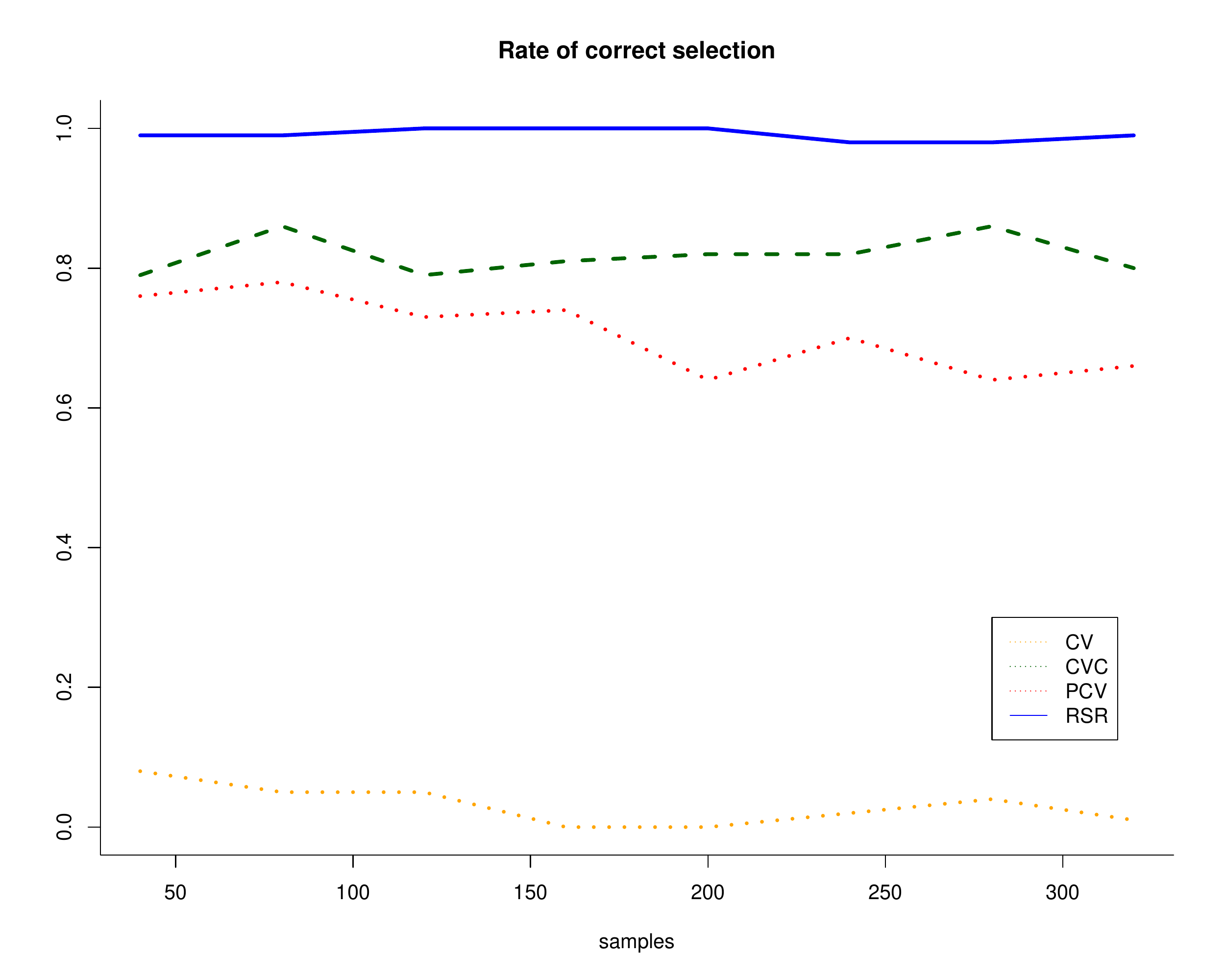}
		\end{subfigure}
		\begin{subfigure}[b]{0.45\textwidth}
			\centering
			\includegraphics[scale=0.3]{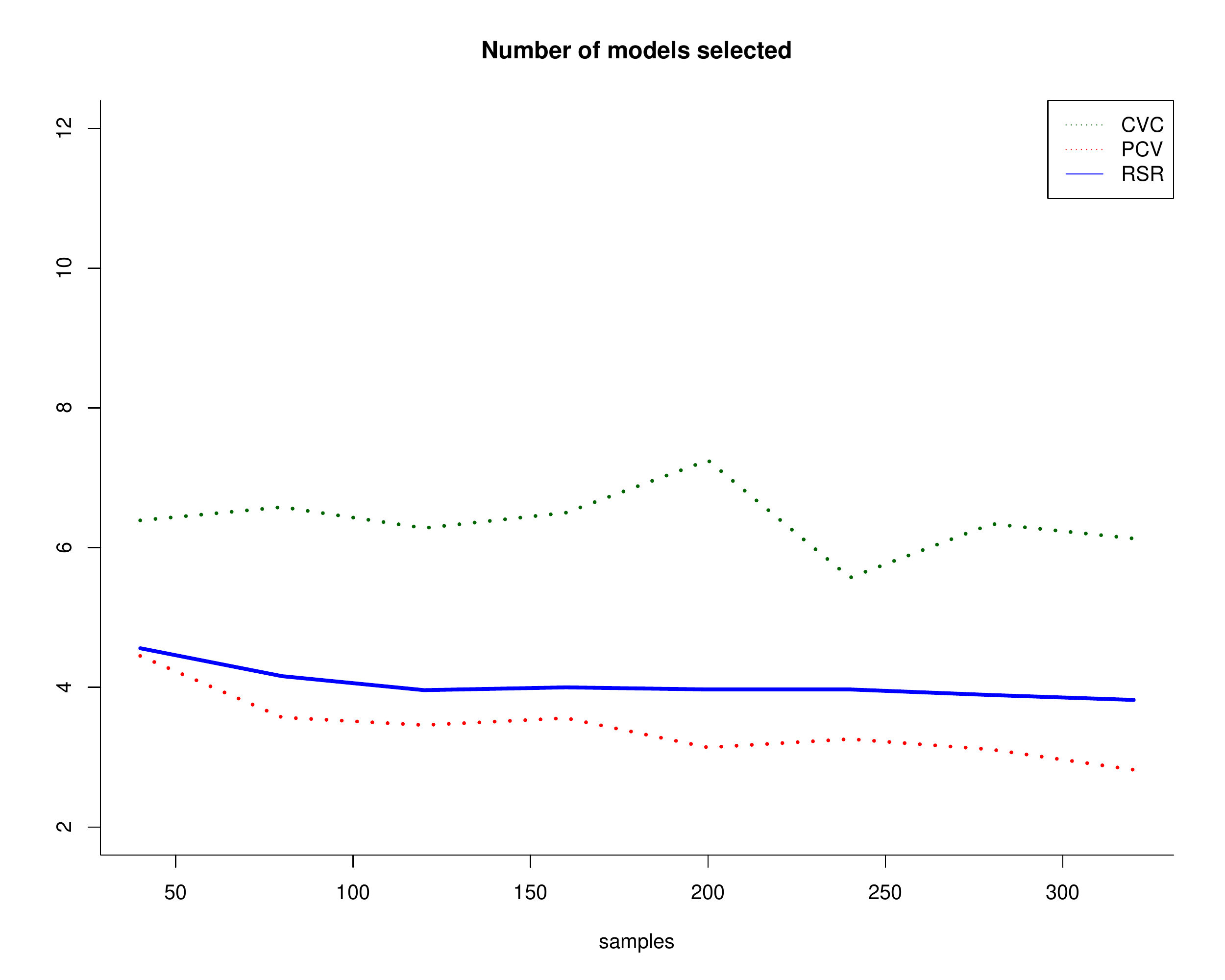}
		\end{subfigure}
		\hfill
		\begin{subfigure}[b]{0.45\textwidth}
			\centering
			\includegraphics[scale=0.3]{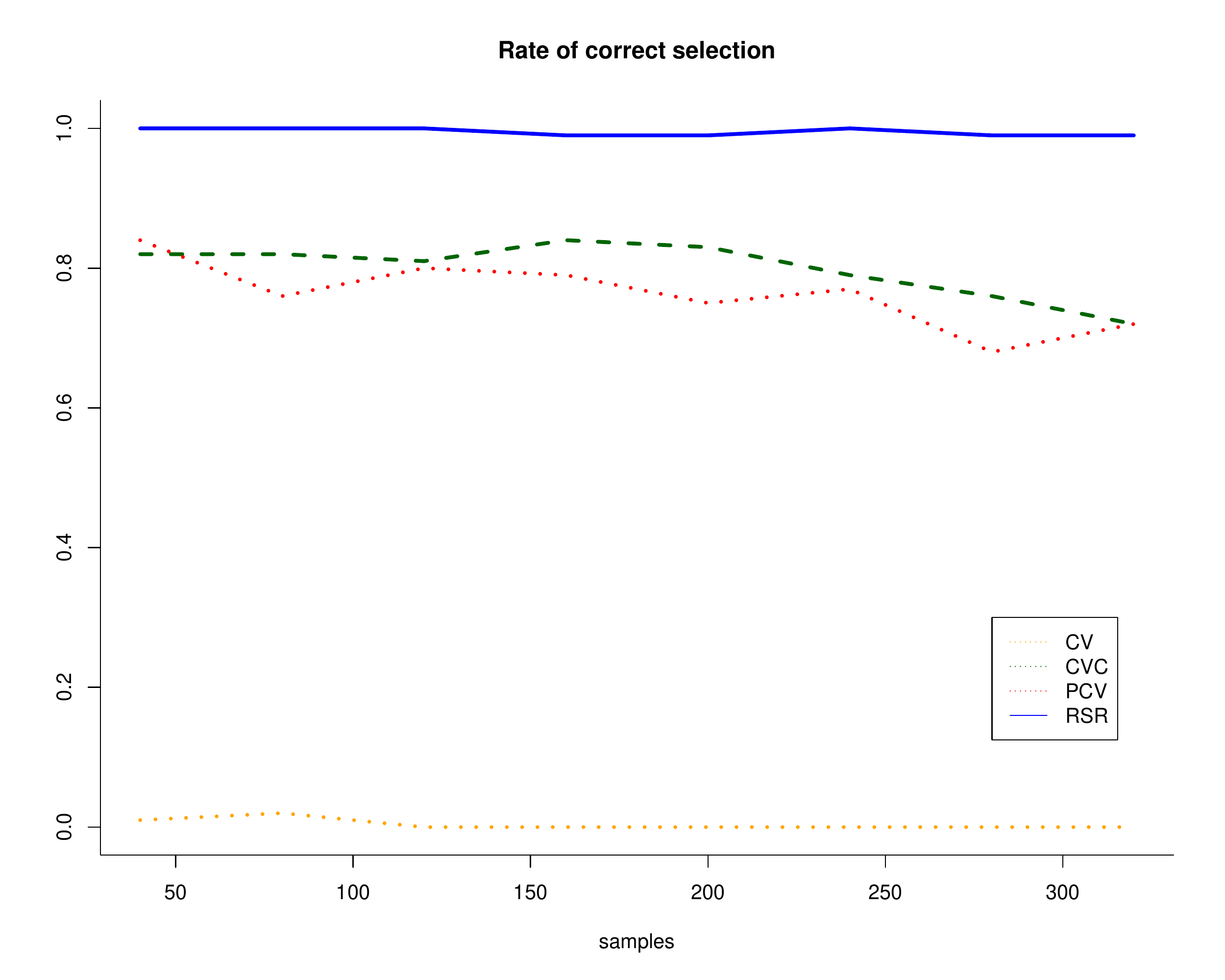}
		\end{subfigure}
		\begin{subfigure}[b]{0.45\textwidth}
			\centering
			\includegraphics[scale=0.3]{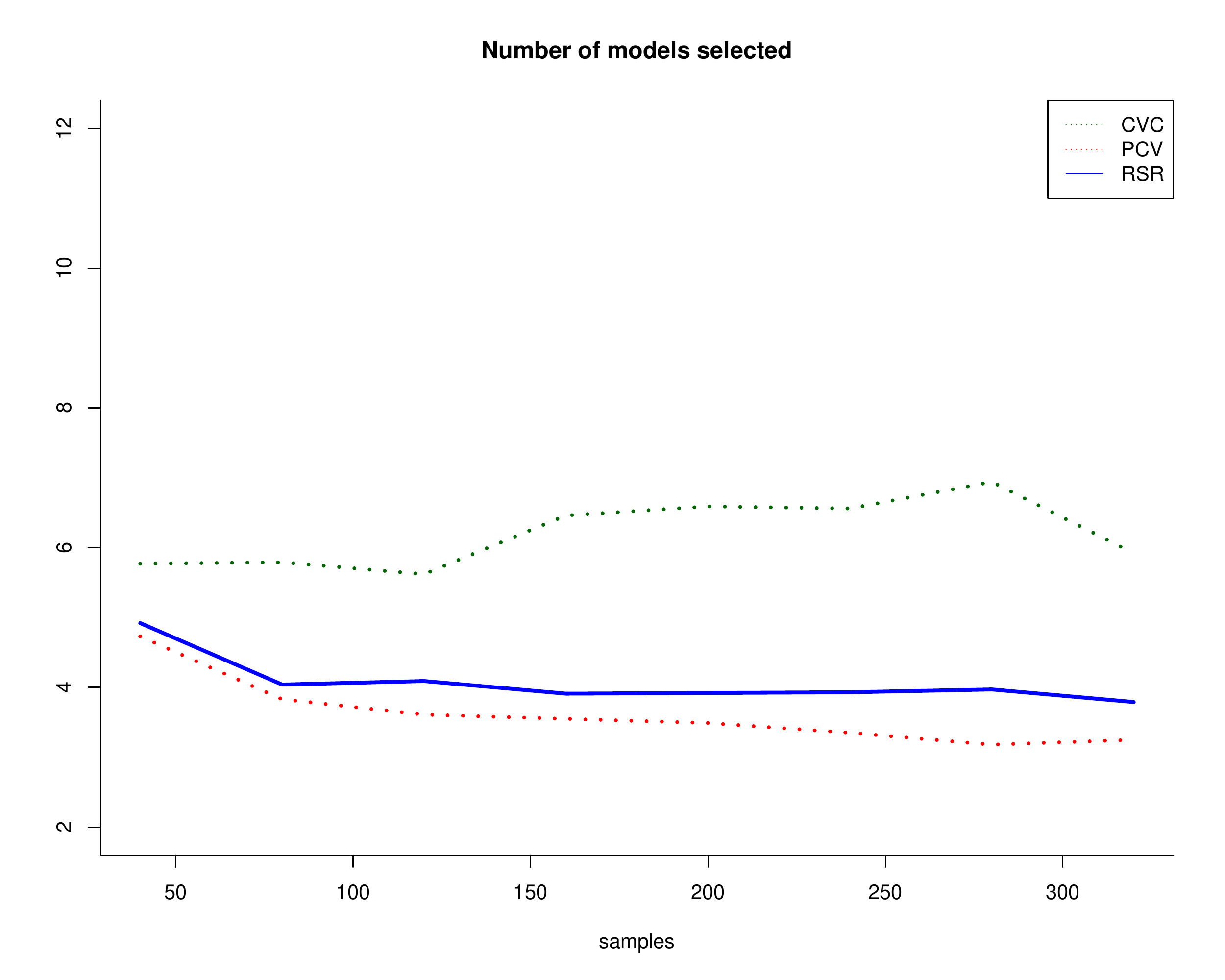}
		\end{subfigure}
		\hfill
		\begin{subfigure}[b]{0.45\textwidth}
			\centering
			\includegraphics[scale=0.3]{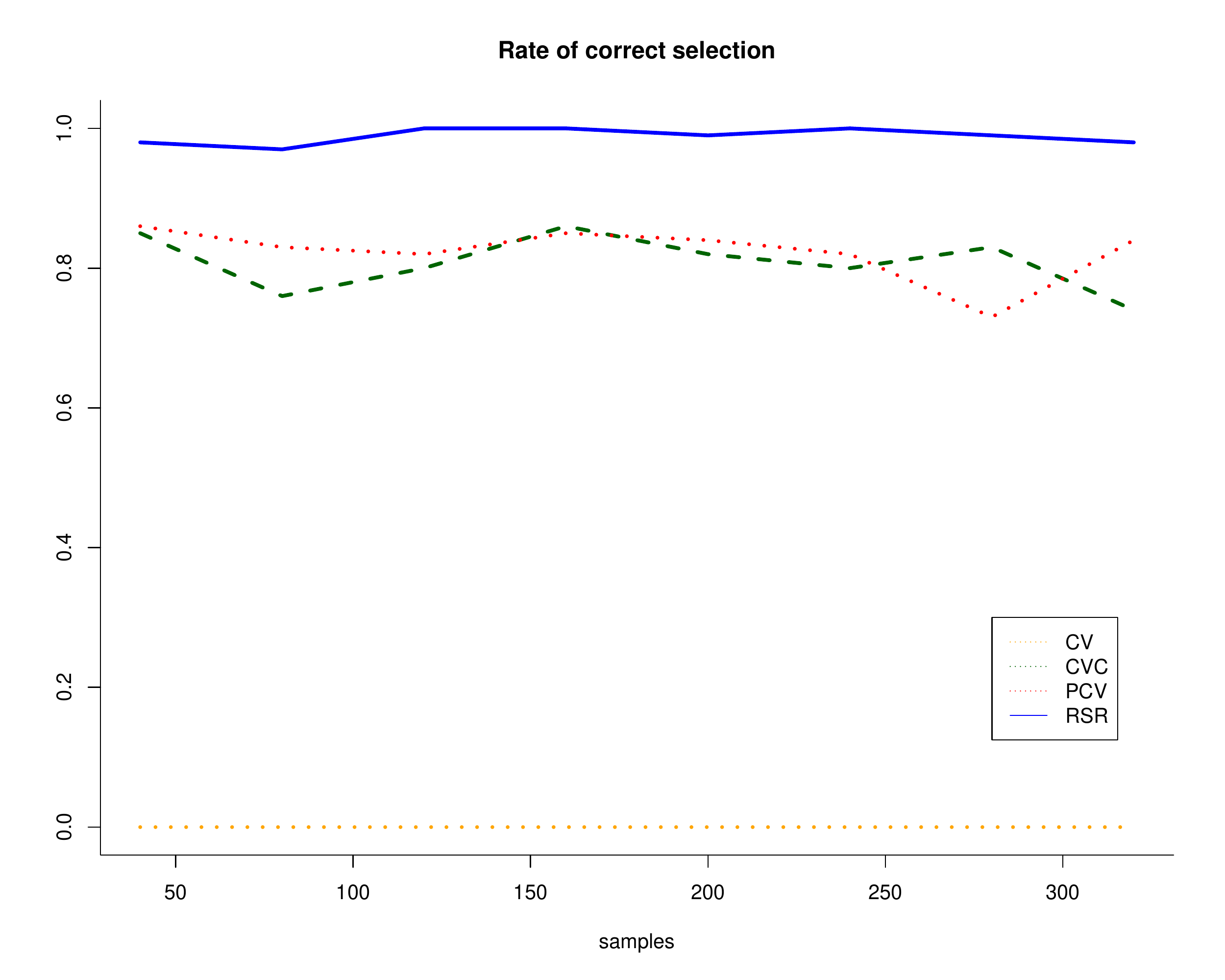}
		\end{subfigure}
		\caption{ The left column: average selected model size. The right column: the rate of correct selection. From the top row to the third row, $X$ is generated from $t_1$, $t_2$, and $t_3$, respectively. $\alpha = 0.1$   }\label{subset1}
	\end{figure}

	\subsection{Case 2: tunning the Lasso for robust risk minimization}
	
	In this example, we demonstrate the performance of RSR in choosing the Lasso tunning parameter for robust high dimensional linear regression. The data is also generated from a linear regression model, with $X\in\mR^d$ following a multivariate Gaussian distribution $N(0, \Sigma)$ and $\eps$ is an independent noise generated from $t_2$ or $t_3$. $\Sigma$ has the auto-correlation structure and $\Sigma_{ij} = \rho^{|i-j|}$, with $\rho$ taking values in $\{0.25, 0.50, 0.75\}$. 
	Two kinds of sample size and dimensionality are considered: a moderate dimensional scenario with $n=200$, $p=200$ and a high dimensional scenario with $n=400$, $p=2000$. The true model is set as $\beta^* = (1, 1, 0, 0, 0, 1, 1, 0, \dots, 0)\in\mR^d$. Thus the oracle model only contains four variables of $X$.
	
	To implement the regression algorithm, we still use the tunning-free adaptive Huber regression. For each generated dataset, we first obtain a sequence of 50 tunning parameters $\lambda$ using the R package adaHuber, then compare all four methods in selecting the $\lambda$. Because CVC, PCV, and RSR tend to select multiple $\lambda$, we always choose the largest $\lambda$ in the confidence set to obtain the sparsest model. Note that when using $K$ fold validation, $\lambda$ is trained on the dataset with sample size $(K-1)n/K$. It has been widely studied in the literature that the optimal $\lambda$ is inverse proportional to the square root of the sample size \citep{van2008high, meinshausen2009lasso, dalalyan2017prediction}. Thus we correct the estimated $\lambda$ by $ \sqrt{1 - K^{-1}}\wh\lambda$ for all the four methods, then apply the corrected $\lambda$ on the entire dataset to obtain the estimated $\wh\beta$. 
	
	We first report the average size of the number of non-zero values in $\wh\beta$ in Table \ref{beta_avg_size}. We can see that the average size of the selected models for RSR is not only the smallest but also the closest to 4, which is the true model size. This indicates that RSR tends to output a sparser model compared to other methods. The immediate question is, are those selected variables accurate? Or are they only random four variables in the design matrix? We summarize the percentage of times that $\wh\beta$ has non-zero values on the support of $\beta^*$ in Table \ref{beta_include_percent}. Clearly, all four methods can identify the true variables with high probability. But the models selected by CV, CVC, and PCV have too many false positives. Note that the estimated inclusion probability in Table \ref{beta_include_percent} is not the same as the high probability bound ($1-\alpha$) derived in the theorems or the rate of correct selection in Figure \ref{subset1}. The estimated probability in Table \ref{beta_include_percent} is only based on the largest $\lambda$ in the confidence set selected by CVC, PCV, and RSR. We further demonstrate the accuracy of RSR in Table \ref{beta_oracle_rate}, where we calculate the percentage of times that $\wh\beta$ exactly selects the four true variables in $\beta^*$. We can see that the proposed method has a huge advantage in identifying the exact oracle model. The other methods such as cross-validation tend to be conservative and select too many variables into the model. Lastly, we report the cross-validated error in Table \ref{cv_error}. We see that RSR has slightly higher predictive error compared to other methods. It further demonstrates the advantage of the proposed method in model selection: the model selected by RSR may not have the smallest predictive mean square error, but it is able to identify the true variables accurately. In other words, the widely used cross-validated error is not necessarily the best choice for robust model selection.

	\begin{table}[ht]
		\centering\caption{The average number of non-zero values in $\wh\beta$, which is estimated by the selected $\lambda$. For CVC, PCV, and RSR, we choose the largest $\lambda$ in the confidence set. }\label{beta_avg_size}
		\begin{tabular}{ccccccc}
			\hline
			$(n, p)$ & $df$ & $\rho$ & CV & CVC & PCV & RSR \\ 
			\hline
			\multirow{6}*{$(200, 200)$} &  \multirow{3}*{2}  & 0.25 & 9.47 & 8.42 & 8.07 & 4.47 \\ 
			&  & 0.50 & 9.15 & 7.69 & 7.77 & 4.55 \\ 
			&  & 0.75 & 8.78 & 7.18 & 7.19 & 4.83 \\ 
			&  \multirow{3}*{3}  & 0.25 & 10.13 & 8.92 & 7.68 & 4.35 \\ 
			&  & 0.50 & 8.86 & 7.67 & 6.68 & 4.26 \\ 
			&  & 0.75 & 8.87 & 7.13 & 6.40 & 4.65 \\ \hline
			\multirow{6}*{$(400, 2000)$} &  \multirow{3}*{2}  & 0.25 & 11.64 & 13.28 & 12.20 & 4.56 \\ 
			&  & 0.50 & 10.88 & 11.92 & 10.94 & 4.54 \\ 
			&  & 0.75 & 9.15 & 9.45 & 9.09 & 4.61 \\ 
			&  \multirow{3}*{3}  & 0.25 & 12.39 & 13.36 & 10.72 & 4.15 \\ 
			&  & 0.50 & 10.92 & 11.78 & 10.13 & 4.26 \\ 
			&  & 0.75 & 10.11 & 9.19 & 7.94 & 4.54 \\ 
			\hline
		\end{tabular}
	\end{table}
	
	\begin{table}[ht]
		\centering\caption{The estimated probability of the support of $\wh\beta$ covers the support of $\beta^*$. $\wh\beta$ is estimated by plugging in the  $\lambda$ selected by each method. }\label{beta_include_percent}
		\begin{tabular}{ccccccc}
			\hline
			$(n, p)$ & $df$ & $\rho$ & CV & CVC & PCV & RSR \\ 
			\hline
			\multirow{6}*{$(200, 200)$} &  \multirow{3}*{2} & 0.25 & 1.00 & 0.97 & 1.00 & 0.94 \\ 
			&  & 0.50 & 0.99 & 0.96 & 0.99 & 0.99 \\ 
			&  & 0.75 & 0.99 & 0.96 & 0.98 & 0.94 \\ 
			&  \multirow{3}*{3} & 0.25 & 1.00 & 1.00 & 1.00 & 0.99 \\ 
			&  & 0.50 & 1.00 & 1.00 & 1.00 & 1.00 \\ 
			&  & 0.75 & 1.00 & 1.00 & 1.00 & 1.00 \\ \hline
			\multirow{6}*{$(400, 2000)$} &  \multirow{3}*{2}  & 0.25 & 1.00 & 1.00 & 1.00 & 1.00 \\ 
			&  & 0.50 & 1.00 & 1.00 & 1.00 & 1.00 \\ 
			&  & 0.75 & 1.00 & 1.00 & 1.00 & 1.00 \\ 
			&  \multirow{3}*{3} & 0.25 & 1.00 & 1.00 & 1.00 & 1.00 \\ 
			&  & 0.50 & 1.00 & 1.00 & 1.00 & 1.00 \\ 
			&  & 0.75 & 1.00 & 1.00 & 1.00 & 1.00 \\ 
			\hline
		\end{tabular}
	\end{table}

	\begin{table}[ht]
		\centering\caption{The estimated probability of the support of $\wh\beta$ is exactly the same as the support of $\beta^*$. $\wh\beta$ is estimated by plugging in the  $\lambda$ selected by each method.  }\label{beta_oracle_rate}
		\begin{tabular}{ccccccc}
			\hline
			$(n, p)$ & $df$ & $\rho$ & CV & CVC & PCV & RSR \\ 
			\hline
			\multirow{6}*{$(200, 200)$}  &  \multirow{3}*{2}  & 0.25 & 0.02 & 0.10 & 0.02 & 0.61 \\ 
			&  & 0.50 & 0.05 & 0.08 & 0.03 & 0.67 \\ 
			&  & 0.75 & 0.04 & 0.08 & 0.08 & 0.34 \\ 
			&  \multirow{3}*{3}  & 0.25 & 0.00 & 0.01 & 0.11 & 0.77 \\ 
			&  & 0.50 & 0.02 & 0.10 & 0.16 & 0.82 \\ 
			&  & 0.75 & 0.02 & 0.12 & 0.19 & 0.58 \\ \hline
			\multirow{6}*{$(400, 2000)$} &  \multirow{3}*{2}  & 0.25 & 0.04 & 0.05 & 0.03 & 0.64 \\ 
			&  & 0.50 & 0.01 & 0.04 & 0.01 & 0.66 \\ 
			&  & 0.75 & 0.06 & 0.05 & 0.01 & 0.53 \\ 
			&  \multirow{3}*{3}  & 0.25 & 0.03 & 0.01 & 0.04 & 0.90 \\ 
			&  & 0.50 & 0.04 & 0.06 & 0.06 & 0.81 \\ 
			&  & 0.75 & 0.02 & 0.05 & 0.07 & 0.62 \\ 
			\hline
		\end{tabular}
	\end{table}

	\begin{table}[h]
		\centering
		\caption{\label{err} The average cross-validated prediction error for all the methods, based on the  $\lambda$ selected by each method. }\label{cv_error}
		\begin{tabular}{ccccccc}
			\hline
			$(n, p)$ & $df$ & $\rho$ & CV & CVC & PCV & RSR \\ 
			\hline
			\multirow{6}*{$(200, 200)$}  & \multirow{3}*{2} & 0.25 & 6.89 & 6.97 & 7.00 & 7.46 \\ 
			&  & 0.50 & 6.46 & 6.51 & 6.56 & 6.95 \\ 
			&  & 0.75 & 5.32 & 5.38 & 5.43 & 5.86 \\ 
			& \multirow{3}*{3} & 0.25 & 2.73 & 2.75 & 2.80 & 3.12 \\ 
			&  & 0.50 & 2.77 & 2.78 & 2.80 & 3.16 \\ 
			&  & 0.75 & 2.79 & 2.79 & 2.81 & 3.14 \\ \hline
			\multirow{6}*{$(400, 2000)$}  & \multirow{3}*{2} & 0.25 & 6.32 & 6.34 & 6.31 & 6.65 \\ 
			&  & 0.50 & 6.70 & 6.69 & 6.69 & 7.03 \\ 
			&  & 0.75 & 6.89 & 6.87 & 6.93 & 7.25 \\ 
			& \multirow{3}*{3} & 0.25 & 2.63 & 2.65 & 2.66 & 2.89 \\ 
			&  & 0.50 & 2.82 & 2.82 & 2.87 & 3.01 \\ 
			&  & 0.75 & 2.77 & 2.77 & 2.78 & 3.02 \\ 
			\hline
		\end{tabular}
	\end{table}

		\section{Real Data Analysis}
	
	The advancement technology in single-cell genomics has provided researchers with simultaneous profiling of multiple omics features in the same cell, and allows scientists to analyze diseases and immune reactions at the single-cell level. In this section, we analyze a timely single-cell multi-omics dataset measured on Covid-19 patients \citep{stephenson2021single}. In the original paper, it was found that the proportion of DC1 cells within all immune cells will change among patients with different statuses of covid-19, including healthy, asymptomatic, mild, moderate, severe, and critical. Both the high dimensional RNA-seq and surface antibodies are measured for the DC1 cells. And it is of great interest to build a prediction model to identify the potential structural relations between RNA and proteins \citep{zhou2020surface, cai2022PNAS}. In this analysis, we build a high dimensional linear model for prediction and select the tuning parameter $\lambda$ by all four methods compared in the simulation. Specifically, $X$ are the RNA-seq and $Y$ are the surface antibodies, and each single-cell corresponds to one observation.
	
	We begin with performing standard quality control and retaining the highly variable genes in the DC1 cells as annotated in the original data. Specifically, we set the minimum number of genes per cell to 200 and set the minimum number of cells per gene to 3. Then we select the highly variable genes annotated in the AnnData object, which is one of the most popular formats in Python to facilitate single-cell data analysis. We obtained 2376 genes and 47 proteins measured on 366 cells. Both the protein and RNA measurements entail heavy-tailed distributions as shown in Figure \ref{fig: des_real_data}. 	We implement the truncated lasso \citep{fan2016shrinkage} as the regression algorithm due to its robustness in genomics data analysis and select the tuning parameter $\lambda$ by all the methods compared in the simulation. Two scenarios are considered: 1) we implement the original Lasso without any truncation on the data, and 2) both $X$ and $Y$ are truncated on the 99\% quantile. That is, the values higher than 99\% of $X$ or $Y$ are set to be the 99\% quantile. The boxplot of the cross-validated error and selected model size of the four methods are reported in Figure \ref{fig: real1}, where $\alpha$ is set as 0.05. Additional numerical results where $\alpha=0.01$ is provided in the appendix. Similar as the simulations, we select the largest $\lambda$ in the confidence set to obtain the sparsest model. We observe in Figure \ref{fig: real1} that the cross-validated prediction errors are all very similar for all the methods. However,  the candidate models selected by RSR have a significantly smaller model size. It shows the RSR is able to identify the most informative subset of genes in predicting the surface proteins, even if the single-cell data has heavy-tail distributions.
	
	\begin{figure}
		\centering
		\begin{subfigure}[b]{0.45\textwidth}
			\centering
			\includegraphics[scale=0.35]{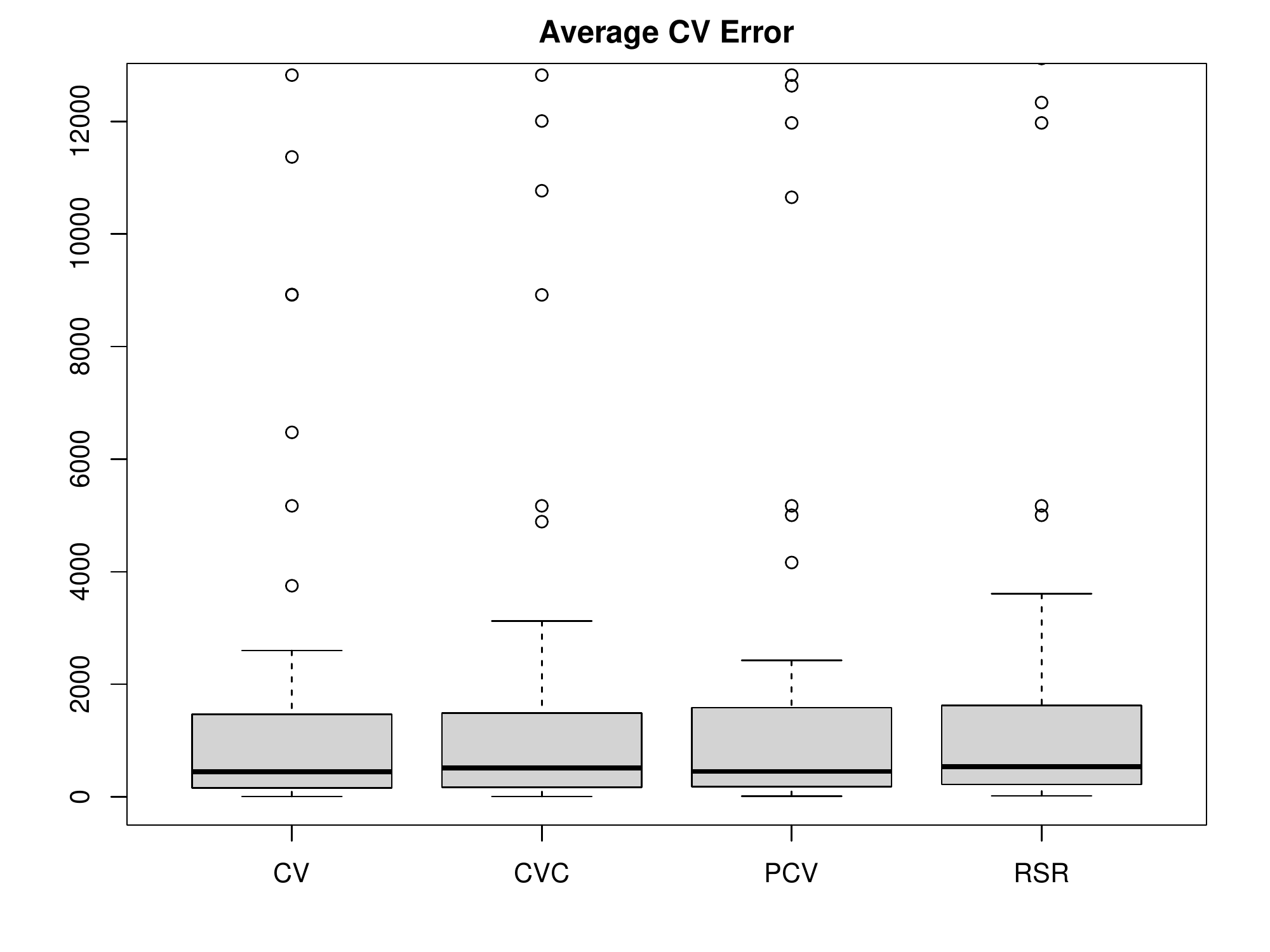}
		\end{subfigure}
		\hfill
		\begin{subfigure}[b]{0.45\textwidth}
			\centering
			\includegraphics[scale=0.35]{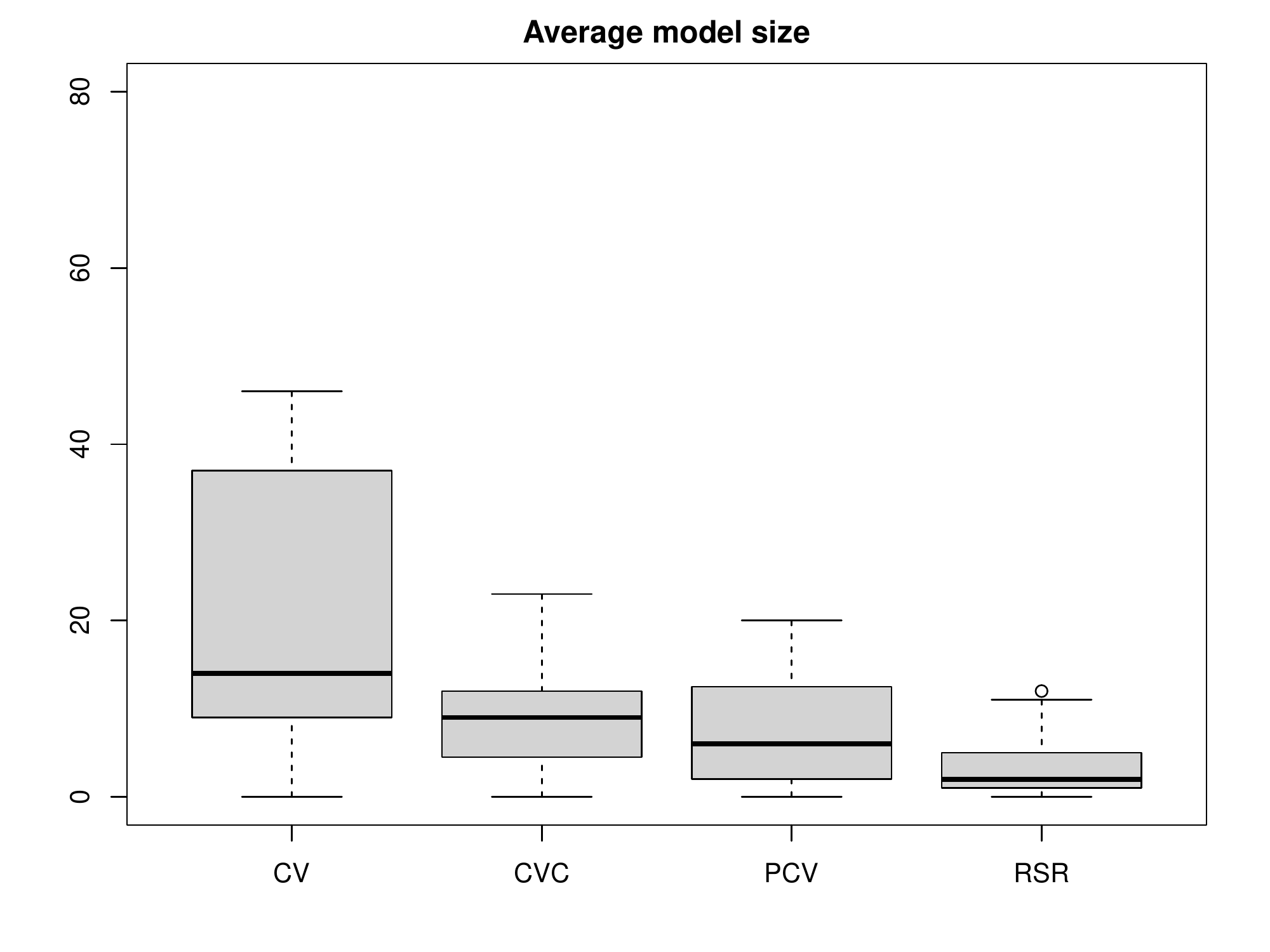}
		\end{subfigure}
		\begin{subfigure}[b]{0.45\textwidth}
			\centering
			\includegraphics[scale=0.35]{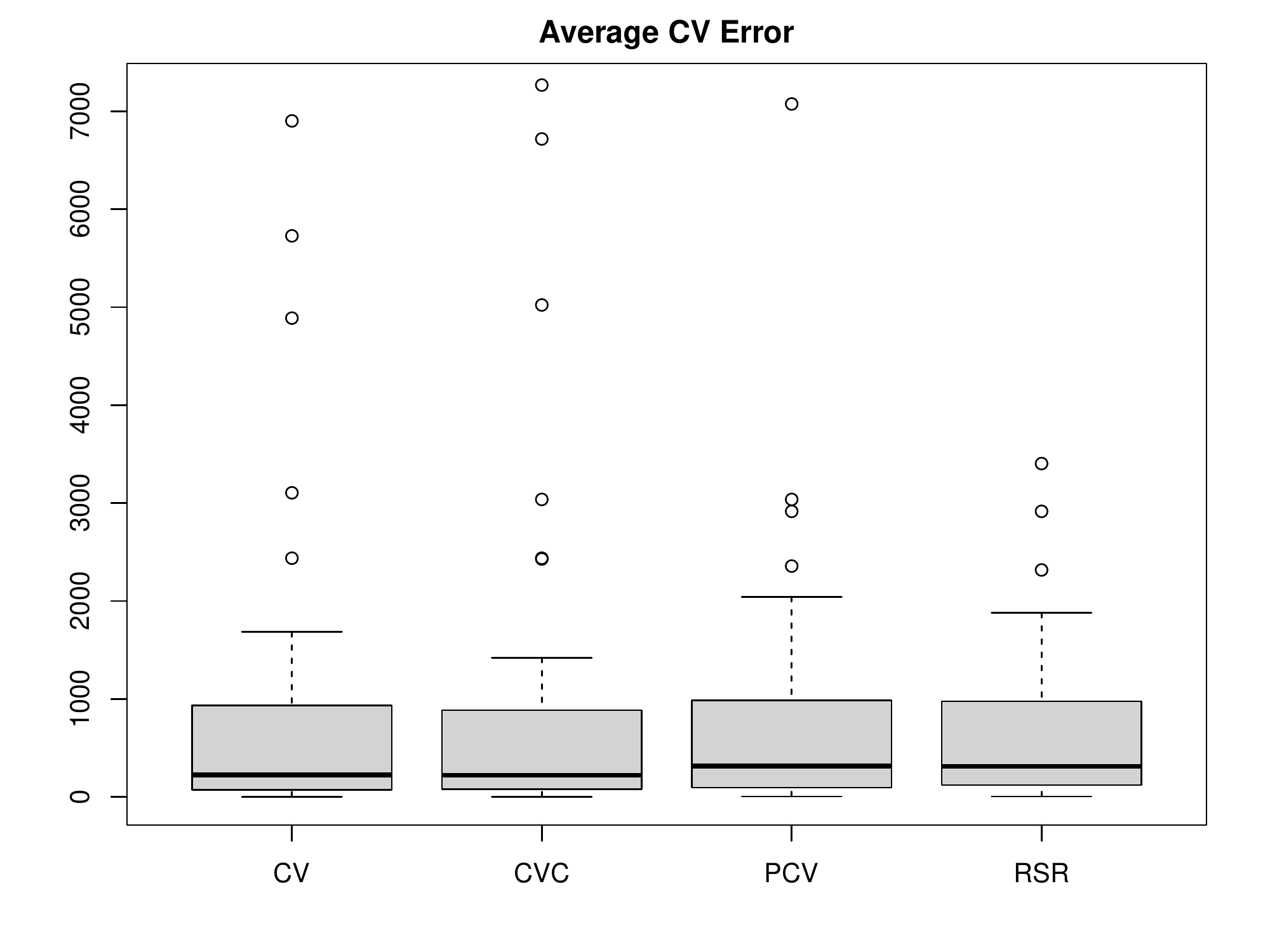}
		\end{subfigure}
		\hfill
		\begin{subfigure}[b]{0.45\textwidth}
			\centering
			\includegraphics[scale=0.35]{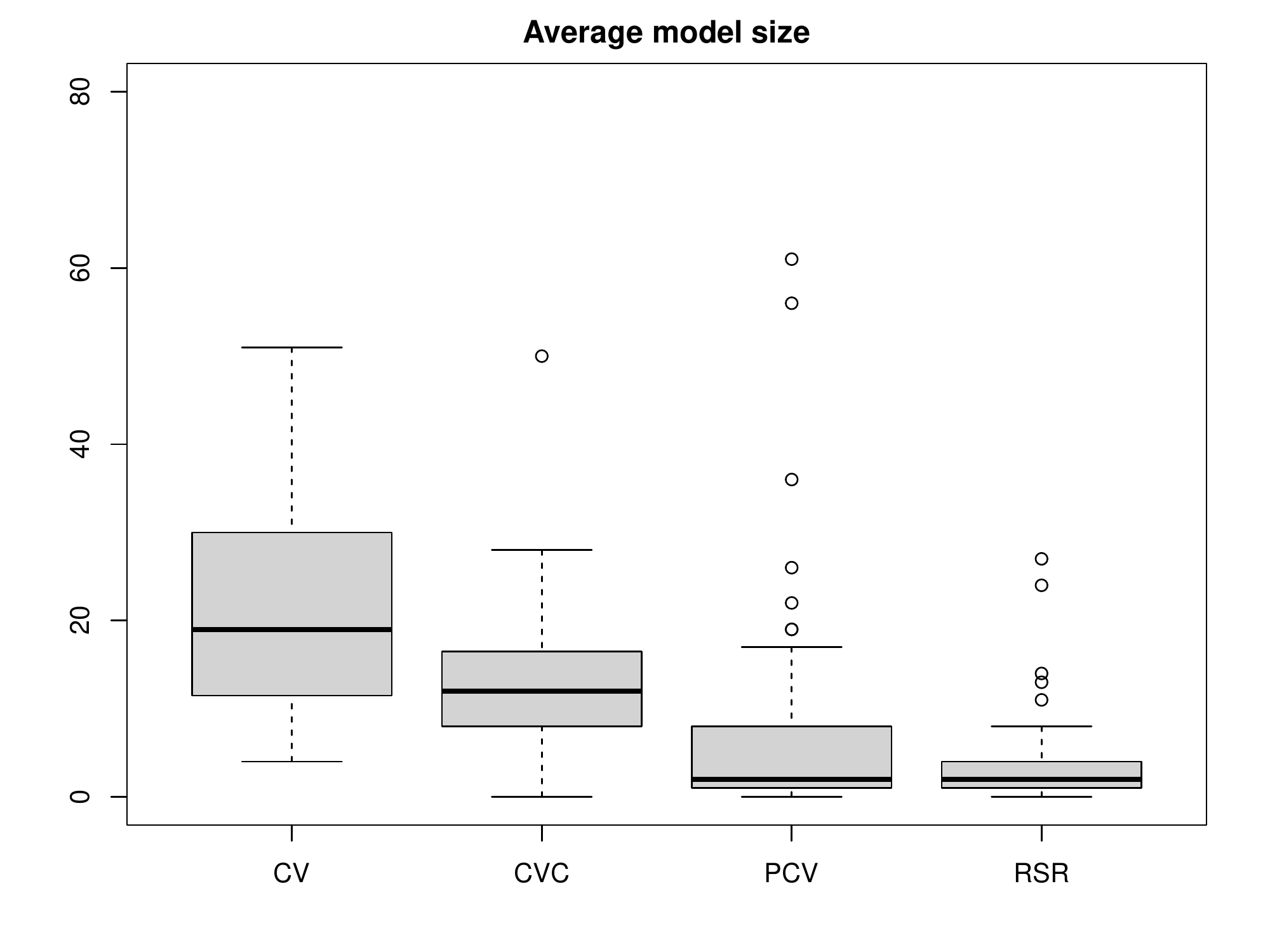}
		\end{subfigure}
		\caption{ Test error and selected model size. The first row represents Lasso without truncation on the data. The second row represents Lasso with truncation on the 99\% quantile of both $X$ and $Y$. $\alpha = 0.05$.}\label{fig: real1}
	\end{figure}

	\section{Discussion}

	In this paper, we propose a robust rank-sum based approach for general model selection problems in heavy-tailed data. By simultaneously testing the distribution of the generalized rank-sum statistics, we transform the cross-validation problem into a high-dimensional testing problem and provide its consistency in selecting optimal candidate models. We also develop a general theory to approximate the minimum/maximum value of a high dimensional vector of two sample U-statistics, with potential applications in other related statistical areas.
	
	We briefly discuss several areas of topics left for future research. It is of interest to prove consistency for model selections in particular statistical problems, such as robust high dimensional linear regression, matrix recovery, principal component analysis, etc., under heavy-tailed data. Besides, we are interested in developing tools to characterize the dependence among samples in the V-fold RSR. Similar techniques can be extended in other sample splitting based high dimensional inference problems. Last but not least, we advocate the key intuition of this paper: apply the idea of statistical inference to other related problems, such as cross-validation in this paper. Similar ideas have been popular in recent years and have already produced fruitful research products, such as the Gaussian differential privacy proposed in \cite{dong2022gaussian}.

\bibliography{cairui}		
\bibliographystyle{refstyle}

	\appendix

\section{Proofs}

For any covariance matrix $\Sigma$, let $Z_{\Sigma}$ denote the random vector drawn from $N(0, \Sigma)$. For $\alpha\in(0,1)$, let $z(\alpha, \Sigma)$ be the $\alpha$-quantile of the minimum of $Z_{\Sigma}$. Denote $M = |\calM|$, $p= M-1$ and $N = \max\{n, p\}$.  $c_1$, $c_2$, $\dots$ and $C_1$, $C_2$, $\dots$ denote constants whose values might vary depending on the situation.

\subsection{Technical Lemmas}



\begin{lemma}\label{lem: mill_ratio}
	Let $\phi(x)$ and $\Phi(x)$ be the density function and cumulative distribution function of standard normal distribution. Then for arbitrary $x>0$, the following inequalities hold
	\beqrs
	\frac{x}{\phi(x)}\leq \frac{1}{1-\Phi(x)}\leq \frac{x}{\phi(x)}\frac{1+x^2}{x^2}.
	\eeqrs
	In particular,
	\beqrs
	\int_{x}^{\infty}\exp(-\frac{u^2}{2})  du \approx  \frac{1}{x}\exp(-\frac{x^2}{2})
	\eeqrs
	holds when $x\rightarrow\infty$.
\end{lemma}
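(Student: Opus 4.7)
The plan is to establish the two-sided Mill's ratio inequality directly, since the statement is equivalent (for $x>0$) to
\[
\frac{x\,\phi(x)}{1+x^2}\;\leq\;1-\Phi(x)\;\leq\;\frac{\phi(x)}{x},
\]
and then read off the asymptotic as $x\to\infty$ from the pinching. I would prove the two inequalities separately, since they require rather different tricks.

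The upper bound is the easy half. I would write $1-\Phi(x)=\int_x^\infty\phi(u)\,du$ and use the inequality $1\leq u/x$ valid on $[x,\infty)$ to get
\[
1-\Phi(x)\;\leq\;\frac{1}{x}\int_x^\infty u\,\phi(u)\,du\;=\;\frac{\phi(x)}{x},
\]
where the last identity follows from $-\phi'(u)=u\phi(u)$ and the fundamental theorem of calculus. Taking reciprocals gives the left inequality in the statement.

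The lower bound is the main obstacle, because the naive double integration by parts (which yields $1-\Phi(x)\geq\phi(x)(x^2-1)/x^3$) is weaker than what we want and in fact fails for $x\leq 1$. My plan is instead to introduce the auxiliary function
\[
F(x)\;=\;(1+x^2)(1-\Phi(x))-x\,\phi(x)
\]
and show $F(x)\geq 0$ on $(0,\infty)$ by monotonicity. A short computation using $\phi'(u)=-u\phi(u)$ collapses the derivative to
\[
F'(x)\;=\;2x(1-\Phi(x))-2\phi(x),
\]
and the already-proved upper bound $1-\Phi(x)\leq\phi(x)/x$ immediately yields $F'(x)\leq 0$ for every $x>0$. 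Thus $F$ is nonincreasing on $(0,\infty)$; since $F(0)=\tfrac12>0$ and $F(x)\to 0$ as $x\to\infty$ (which follows from the asymptotic equivalence $1-\Phi(x)\sim\phi(x)/x$, itself a consequence of the two bounds), we conclude $F(x)\geq 0$ throughout. Rearranging and taking reciprocals delivers the right inequality in the statement. The one delicate point to verify is that the limit $F(x)\to 0$ is not circular; this is fine because it only uses the upper bound from the previous step together with a simple $O(\phi(x)/x)$ estimate.

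The ``in particular'' assertion then follows by sandwiching: since $\int_x^\infty\exp(-u^2/2)\,du=\sqrt{2\pi}\,(1-\Phi(x))$ and $\sqrt{2\pi}\,\phi(x)=\exp(-x^2/2)$, the two-sided bound gives
\[
\frac{x}{1+x^2}\,e^{-x^2/2}\;\leq\;\int_x^\infty e^{-u^2/2}\,du\;\leq\;\frac{1}{x}\,e^{-x^2/2},
\]
and both sides are equivalent to $x^{-1}e^{-x^2/2}$ as $x\to\infty$, which is the stated asymptotic.
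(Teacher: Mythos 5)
Your proof is correct, but it is worth noting that the paper does not actually prove this lemma at all: it simply records it as the classical Mill's ratio inequality and cites Theorem 1 of \cite{fan2012new}. You have therefore supplied a genuinely self-contained argument where the paper defers to the literature. Your two halves check out: the upper bound $1-\Phi(x)\leq \phi(x)/x$ follows cleanly from $1\leq u/x$ on $[x,\infty)$ together with $\int_x^\infty u\,\phi(u)\,du=\phi(x)$, and for the lower bound the derivative computation indeed collapses to $F'(x)=2x\bigl(1-\Phi(x)\bigr)-2\phi(x)\leq 0$, so that $F$ is nonincreasing with $\lim_{x\to\infty}F(x)=0$ (each term of $F$ tends to $0$ separately, using only the already-established upper bound, so there is no circularity) and hence $F\geq 0$ on $(0,\infty)$. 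Your remark that the naive double integration by parts only gives the weaker $\phi(x)(x^2-1)/x^3$ bound, which is useless for small $x$, correctly identifies why the auxiliary-function trick is needed. The sandwich for the ``in particular'' asymptotic is also fine since $x/(1+x^2)\sim 1/x$. In short: the paper buys brevity by citation; your argument buys self-containedness at the cost of about a page, and either is acceptable here since the lemma is standard.
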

Lemma \ref{lem: mill_ratio} is known as the Mill's Ratio inequality. See for example, Theorem 1 in \cite{fan2012new}. 

\begin{lemma}\label{lem: Fuk-Nagaev}
	Let $X_1,\dots, X_n$ be independent random vectors in $\mR^p$. Define $\sigma^2 = \max_{1\leq j\leq p}\sum_{i=1}^{n}\E[X_{ij}^2]$. Then for every $s>2$ and $t>0$,
	\beqrs
	&&\P\Big(\max_{1\leq j\leq p} |\frac{1}{n}\sum_{i=1}^{n}(X_{ij} - \E[X_{ij}])|\geq 2\E\Big[ \max_{1\leq j\leq p} |\frac{1}{n}\sum_{i=1}^{n}(X_{ij} - \E[X_{ij}])|\Big] + t \Big)\\
	&&\leq \exp(-\frac{n^2t^2}{3\sigma^2}) + \frac{K_s}{n^st^s}\sum_{i=1}^{n}\E\Big[ \max_{1\leq j\leq p}|X_{ij}|^s\Big]
	\eeqrs
\end{lemma}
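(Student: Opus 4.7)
The plan is to prove this via the classical Fuk--Nagaev approach, combining truncation with a Talagrand--Bousquet concentration inequality on the bounded piece and a Markov-type tail bound on the heavy-tailed remainder. The exponential term $\exp(-n^2 t^2/(3\sigma^2))$ will emerge from concentration of a truncated maximum around its expectation, while the polynomial term $K_s/(n^s t^s)\sum_i \E[\max_j |X_{ij}|^s]$ will absorb the contribution of the large jumps of $X_{ij}$.

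First I would fix a truncation level $u$ of order $nt$ and decompose $X_{ij} = Y_{ij} + Z_{ij}$ with $Y_{ij} = X_{ij}\indic\{|X_{ij}| \leq u\}$. The process $M_Y := \max_j |\frac{1}{n}\sum_i (Y_{ij} - \E Y_{ij})|$ is then the supremum of $p$ sums of independent, $u/n$-bounded, centered random variables with coordinate-wise variances bounded by $\sigma^2/n^2$, so Bousquet's one-sided concentration inequality yields
\begin{equation*}
\P\Big(M_Y \geq \E M_Y + \tfrac{t}{2}\Big) \leq \exp\Big(-\frac{n^2 t^2}{3\sigma^2}\Big),
\end{equation*}
once $u$ is chosen small enough that the boundedness contribution to the exponent is absorbed by the constant $3$. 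For the unbounded remainder I would use the pointwise bound
\begin{equation*}
\max_j \Big|\frac{1}{n}\sum_i (Z_{ij} - \E Z_{ij})\Big| \leq \frac{2}{n}\sum_i \max_j |X_{ij}|\indic\{|X_{ij}| > u\},
\end{equation*}
and apply Markov's inequality of order $s$, using $|X_{ij}|\indic\{|X_{ij}| > u\} \leq |X_{ij}|^s/u^{s-1}$, to produce the advertised $K_s/(n^s t^s)\sum_i \E[\max_j |X_{ij}|^s]$ term.

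The main obstacle is the bookkeeping required to absorb the centering bias $\E Y_{ij} - \E X_{ij}$ and the $L^1$-norm $\E M_Z$ of the remainder into the factor $2\,\E[\max_j |\frac{1}{n}\sum_i(X_{ij} - \E X_{ij})|]$ appearing on the left-hand side, and to tune $u$ so that the leading constants come out to exactly $3$ and $K_s$. This is delicate but routine; the same argument appears essentially line-for-line in the standard derivations of the Fuk--Nagaev inequality for maxima of sums (e.g.\ Einmahl--Li or Adamczak), and since the statement of the lemma is itself a classical result, I would expect the author's proof to cite an existing reference rather than reproduce the argument in detail.
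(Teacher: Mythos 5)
You correctly anticipated the paper's treatment: it gives no proof of this lemma at all, simply noting that it is a Fuk--Nagaev type inequality and citing Theorem 3.1 of Einmahl and Li (2008) and Lemma D.2 of Chernozhukov et al.\ (2019). Your truncation-plus-concentration architecture is indeed the standard route to those cited results, so at the level of strategy there is nothing to fault. One quantitative step in your sketch would fail as written, however: the claim $\P(M_Y \ge \E M_Y + t/2) \le \exp(-n^2t^2/(3\sigma^2))$ cannot come out of Bousquet's inequality with deviation $t/2$, since Bousquet's denominator already starts at $2\sigma^2/n^2$ for the \emph{full} deviation $t$, so spending only half of $t$ on the concentration step yields at best an $8\sigma^2$-type constant. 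The published proofs instead allocate essentially all of $t$ to the Talagrand/Bousquet step and absorb the truncation bias and the unbounded remainder into the slack between $2\,\E\bigl[\max_j|n^{-1}\sum_i(X_{ij}-\E X_{ij})|\bigr]$ and $\E M_Y$, with the Hoffmann--J{\o}rgensen inequality controlling the expectation of the truncated-away part. Relatedly, a truncation level $u\sim nt$ is too crude to keep the Bernstein linear term below $\sigma^2/n^2$ in all regimes; the regime where it cannot be is precisely where the polynomial term is arranged to dominate, and that trade-off is the source of the constant $K_s$. So the gap is one of constant bookkeeping rather than of ideas, and since the lemma is a quoted classical result, deferring to the references, as both you and the author do, is the right call.
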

Lemma \ref{lem: Fuk-Nagaev} is a Fuk-Nagaev type inequality. See, for example, Theorem 3.1 in \cite{einmahl2008characterization} and Lemma D.2 in \cite{chernozhukov2019inference}.

\begin{lemma}\label{lem: sigma_bound}
	Assume that $p = C_1\exp(n^{C_2})$ and $C_2<1/2$, then
	\beqrs
	\P\Big(\max_{1\leq j\leq p}|\frac{\sigma_j}{\wh\sigma_j} - 1|>r \Big) = o(1),
	\eeqrs
	where $r = c_1 n^{-c_2}$,  $0<C_2<c_2<1-C_2$. $r\in(0, 1)$ when $n$ is sufficiently large.
\end{lemma}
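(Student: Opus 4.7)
The plan is to reduce the relative-error bound on $\sigma_j/\wh\sigma_j$ to an absolute-error bound on $\wh\sigma_j^2 - \sigma_j^2$, and then bound the latter uniformly via empirical process theory. Comparing (\ref{eq: sigma}) with (\ref{eq: sigma_hat}), the difference $\wh\sigma_j^2 - \sigma_j^2$ is, up to a constant multiple, the gap between the sample covariance of $\wh F_1(B_i)$ and $\wh F_{2,j}(A_i)$ and the population covariance of $F_1(B_1)$ and $F_{2,j}(A_1)$. So everything reduces to controlling this covariance-estimation error uniformly over $1 \leq j \leq p$, and then converting an absolute error on variances into a relative error on standard deviations.

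Next, I would decompose the covariance-estimation error by writing $\wh F_1 = F_1 + (\wh F_1 - F_1)$ and $\wh F_{2,j} = F_{2,j} + (\wh F_{2,j} - F_{2,j})$ and expanding. The leading piece is an i.i.d.\ average $n^{-1}\sum_i[F_1(B_i) F_{2,j}(A_i) - \E\{F_1(B_1) F_{2,j}(A_1)\}]$ with summands bounded in $[-1, 1]$, to which I would apply the Fuk-Nagaev inequality (Lemma \ref{lem: Fuk-Nagaev}) with a moment parameter $s$ large enough to make the polynomial-tail term negligible at $t \asymp \sqrt{\log p/n}$; this gives $\max_j$ of order $\sqrt{\log p/n}$. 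The residual pieces contain factors of $\wh F_{2,j} - F_{2,j}$ or $\wh F_1 - F_1$, which I would control using the Dvoretzky-Kiefer-Wolfowitz inequality applied coordinate-wise together with a union bound over $j$, giving $\max_j \sup_x |\wh F_{2,j}(x) - F_{2,j}(x)| = O_p(\sqrt{\log p/n})$, so these residuals sit at the same rate.

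Combining, we get $\max_{1 \leq j \leq p}|\wh\sigma_j^2 - \sigma_j^2| = O_p(\sqrt{\log p/n}) = O_p(n^{-(1-C_2)/2})$ under the lemma's assumption $p = C_1 \exp(n^{C_2})$. Since $\sigma_j^2$ is bounded away from zero uniformly in $j$ (which we may assume under the regularity conditions of Theorem \ref{thm: ss_consistency}), a first-order Taylor expansion of $\sqrt{\cdot}$ turns the absolute error on variances into a relative error on standard deviations, delivering $\max_j|\sigma_j/\wh\sigma_j - 1| = O_p(n^{-(1-C_2)/2}) = o_p(r)$ for $r = c_1 n^{-c_2}$.

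The hardest part will be stretching the argument to cover the full range $c_2 \in (C_2, 1 - C_2)$ promised by the lemma, rather than just $c_2 < (1 - C_2)/2$ that the blunt union-bound argument above supplies. To close this gap, one likely needs a sharper maximal inequality that exploits the small variance of $F_1(B) F_{2,j}(A)$, for instance a Bernstein-type bound in place of Fuk-Nagaev, or a chaining argument over the class $\{F_{2,j} : j \leq p\}$ rather than a naive union bound. A secondary technical nuisance is the dependence between $\wh F_1$ and $\wh F_{2,j}$ induced by $A_i$ and $B_i$ sharing the same data point $(X_i, Y_i)$; this does not change the rate but must be tracked when computing the covariances arising in the decomposition.
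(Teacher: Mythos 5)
Your overall strategy matches the paper's: reduce the relative error $|\sigma_j/\wh\sigma_j-1|$ to the uniform deviation $\max_j|\wh\sigma_j^2-\sigma_j^2|$, control that deviation by a union bound over $j$ combined with the Fuk--Nagaev inequality of Lemma \ref{lem: Fuk-Nagaev} (the paper also uses Lemma \ref{maxE_bound} to bound the expected maximum first), and your add-and-subtract decomposition of $\wh F_1\wh F_{2,j}-F_1F_{2,j}$ with DKW for the empirical-CDF residuals is a legitimate, if slightly more explicit, substitute for the paper's $D_1+D_2$ split. The one substantive divergence is how the denominator is handled. You assume $\sigma_j^2$ is bounded away from zero uniformly in $j$, attributing this to ``regularity conditions of Theorem \ref{thm: ss_consistency}''; no such condition appears in the lemma, the theorem, or anywhere in the paper, and it is not innocuous here, since by \cref{eq: sigma} the $\sigma_j^2$ can be small precisely when candidate models are highly similar, which is the regime of interest. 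The paper instead works on the event $\{\min_j\wh\sigma_j^2> t\}$ with a \emph{vanishing} threshold $t=n^{-c_3}$, bounds $\P(\min_j\wh\sigma_j^2\leq t)$ by Bernstein plus a union bound, and consequently must beat the smaller deviation level $rt=c_1n^{-(c_2+c_3)}$ rather than a constant multiple of $r$. This is the step you would need to supply to prove the lemma as stated. On your final worry: it is well founded, and in fact cuts against the paper as well. Your constant-lower-bound route yields $c_2<(1-C_2)/2$; the paper's route, because of the extra $n^{-c_3}$ loss and the constraint $c_4<1/2$ in its choice $t_1=n^{-c_4}$, effectively needs $c_2+c_3$ below roughly $\min\{1/2,(1-C_2)/2\}$, so neither argument transparently delivers the full advertised range $C_2<c_2<1-C_2$. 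Your suggestion of a Bernstein-type refinement exploiting the small variance of $F_1(B)F_{2,j}(A)$ is a reasonable direction, but note that $\var\{F_1(B_1)F_{2,j}(A_1)\}$ is of constant order in general, so variance-adaptive bounds alone will not rescue the exponent; the discrepancy appears to be a looseness in the lemma's statement rather than a fixable defect of your argument.
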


\begin{proof}
	
	Note that for $a>0$, $|\sqrt a - 1|= |a-1|/(\sqrt a + 1)\leq |a-1|$, we have
	\beqrs
	\P\Big(\max_{1\leq j\leq p}|\frac{\sigma_j}{\wh\sigma_j} - 1|>r \Big)\leq 	\P\Big(\max_{1\leq j\leq p}|\frac{\sigma_j^2}{\wh\sigma_j^2} - 1|>r \Big).
	\eeqrs
	Let $t = n^{-c_3}$, $c_3>0$ such that $0<c_2+c_3<1-C_2$.  Recall that  $r = c_1 n^{-c_2}$. Then 
	\beqrs
	\P\Big(\max_{1\leq j\leq p}|\frac{\sigma_j^2}{\wh\sigma_j^2} - 1|>r \Big)&=& \P\Big(\max_{1\leq j\leq p}|\frac{\sigma_j^2}{\wh\sigma_j^2} - 1|>r, \min_j\wh\sigma_j^2> t \Big) + \P\Big(\max_{1\leq j\leq p}|\frac{\sigma_j^2}{\wh\sigma_j^2} - 1|>r, \min_j\wh\sigma_j^2\leq t \Big)\\
	&\leq&\P\Big(\max_{1\leq j\leq p}|\wh\sigma_j^2 - \sigma_j^2 |> rt \Big) + \P\Big(\min_j\wh\sigma_j^2\leq t \Big)
	\eeqrs
	By the union bound and Bernstein inequality, and $p = C_1\exp(n^{C_2})$, 
	\beqrs
	\P\Big(\min_j\wh\sigma_j^2\leq t \Big) \leq \sum_{j=1}^{p} \P(\wh\sigma_j^2\leq t)\leq p\exp( - C_3 t^2 n) = C_1\exp(-C_3 n^{1-2c_3} + n^{C_2}) , 
	\eeqrs
	which goes to zero when $C_2<1$.
	Recall that
	\beqrs  
	&&\sigma_j^2 = \frac{1}{6n} - 2 \cov\{F_{1}(B_1),  F_{2,j}(A_1)\},\\ 
	&&\wh\sigma_{j}^2 =  \frac{1}{6n} - \frac{2}{n}\Big( \sum_{i=1}^{n}\wh F_{1}(B_i) \wh F_{2, j}(A_i)  - \sum_{i=1}^{n} \wh F_{1}(B_i) \sum_{i=1}^{n}\wh F_{2,j}(A_i)     \Big).
	\eeqrs
	Thus we have
	\beqrs
	&&\P\Big(\max_{1\leq j\leq p}| \sigma_j^2 -\wh\sigma_j^2 |>rt \Big)\\
	&\leq&\P\Big(\max_{1\leq j\leq p}|\frac{1}{n}\sum_{i=1}^{n}\wh F_{1}(B_i) \wh F_{2, j}(A_i)  - \E(F_{1}(B_1) F_{2, j}(A_1)) | > rt/4 \Big) \\
	&&+ \P\Big( \max_{1\leq j\leq p}|\frac{1}{n} \sum_{i=1}^{n}\wh F_{1}(B_i)\frac{1}{n}\sum_{i=1}^{n}\wh F_{2,j}(A_i) - \E F_{1}(B_1)\E F_{2,j}(A_1)| >{ rt/4}\Big) \\
	&\defn& D_1+D_2.
	\eeqrs
	Furthermore,
	\beqrs
	D_2 \leq&& \P\Big( \max_{1\leq j\leq p}|\frac{1}{n} \sum_{i=1}^{n}\wh F_{1}(B_i)\big(\frac{1}{n}\sum_{i=1}^{n}\wh F_{2,j}(A_i) - \E F_{2,j}(A_1)\big)| >{ rt/8}\Big) + \\
	&&\P\Big( \max_{1\leq j\leq p}|\big(\frac{1}{n} \sum_{i=1}^{n}\wh F_{1}(B_i) - \E F_{1}(B_1)\big)\E F_{2,j}(A_1)| >{ rt/8}\Big).  
	\eeqrs
	Note that $\E F_{2,j}(A_1) = 1/2$, and $n^{-1}\sum_{i=1}^{n}\wh F_{1}(B_i)$ goes to $1/2$ when $n$ is large enough. Thus it suffices to bound
	\beqrs
	\P\Big( \max_{1\leq j\leq p}|\frac{1}{n}\sum_{i=1}^{n}\wh F_{2,j}(A_i) - \E F_{2,j}(A_1)| > rt/4\Big) + \P\Big( \max_{1\leq j\leq p}|\frac{1}{n} \sum_{i=1}^{n}\wh F_{1}(B_i) - \E F_{1}(B_1)| > rt/4\Big).  
	\eeqrs
	Because  $\wh F_{1}$, $\wh F_{2,j}$, $ F_{1}$, $ F_{2,j}$ are all uniformly bounded on $[0, 1]$. By Lemma \ref{maxE_bound}, we have
	\beqrs
	\E\Big[\max_{1\leq j\leq p}|\frac{1}{n}\sum_{i=1}^{n}\wh F_{2,j}(A_i)   - \E F_{2,j}(A_1) | \Big] \leq K_1 \log(p)/\sqrt n.
	\eeqrs
	By Lemma \ref{lem: Fuk-Nagaev}, for every $t_1>0$,
	\beqrs
	&&\P\Big( \max_{1\leq j\leq p}|\frac{1}{n}\sum_{i=1}^{n}\wh F_{2,j}(A_i)   - \E F_{2,j}(A_1) | >t_1 +2 K_1\log(p)/ n\Big)\\
	\leq&& \exp(-Cn t_1^2) + C t_1^{-2}n^{-1}.
	\eeqrs
	Let $t_1 = n^{-c_4}$ such that $1-2c_4>0$, and $c_4>c_2+c_3$. Then we have
	$t_1 +2 K_1\log(p)/ n= O(n^{-c_4} + n^{C_2-1}) < rt/4 = O(n^{-(c_2+c_3)})$, then we have
	\beqrs
	\P\Big( \max_{1\leq j\leq p}|\frac{1}{n}\sum_{i=1}^{n}\wh F_{2,j}(A_i) - \E F_{2,j}(A_1)| > rt/4\Big) < o(1).
	\eeqrs
	Similarly, we can show $\P\Big( \max_{1\leq j\leq p}|\frac{1}{n} \sum_{i=1}^{n}\wh F_{1}(B_i) - \E F_{1}(B_1)| > rt/4\Big)  = o(1)$. And $D_1 = o(1)$. This finishes the proof of Lemma \ref{lem: sigma_bound}.
	
\end{proof}

\begin{lemma}\label{maxE_bound}
	Let $X_j$, $j=1,\dots,d$, $d\geq2$, be a sequence of random variables with concentration bounds 
	\beqrs
	\P(|X_j|\geq t)\leq 2\exp(-\frac{t^2}{C_j}).
	\eeqrs
	Let $C_{\max} = \max_{1\leq j\leq d}\{C_1, \dots, C_d\}$. Then 
	\beqrs
	\E\max_{1\leq j\leq d} X_j< 2\sqrt{C_{\max} \log d}.
	\eeqrs
\end{lemma}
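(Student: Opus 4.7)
The plan is to bound $\E\max_j X_j$ by $\E\max_j |X_j|$ and then apply the layer-cake formula $\E Y = \int_0^\infty \P(Y>t)\,dt$ to the nonnegative random variable $Y=\max_j|X_j|$. Combining the union bound with the hypothesis $\P(|X_j|\geq t)\leq 2\exp(-t^2/C_j)$ gives
\[
\P\Bigl(\max_j |X_j| > t\Bigr) \;\leq\; \min\Bigl(1,\; 2d\exp(-t^2/C_{\max})\Bigr).
\]

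Next I would split the integral at a threshold $t_0$: on $[0,t_0]$ use the trivial bound $\P\leq 1$, contributing $t_0$, and on $[t_0,\infty)$ use the sub-Gaussian tail. For the latter I would invoke Mill's ratio (Lemma \ref{lem: mill_ratio}) after the linear change of variables $u = t\sqrt{2/C_{\max}}$ to obtain
\[
\int_{t_0}^\infty 2d\exp(-t^2/C_{\max})\,dt \;\leq\; \frac{d\,C_{\max}}{t_0}\exp(-t_0^2/C_{\max}).
\]
Choosing $t_0=\sqrt{C_{\max}\log(2d)}$ makes the exponential factor equal to $1/(2d)$, so the tail piece collapses to $\tfrac{1}{2}\sqrt{C_{\max}/\log(2d)}$. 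Adding the two contributions yields
\[
\E\max_j X_j \;\leq\; \sqrt{C_{\max}\log(2d)} \;+\; \tfrac{1}{2}\sqrt{C_{\max}/\log(2d)}.
\]

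The remaining task is the elementary verification that the right-hand side is strictly less than $2\sqrt{C_{\max}\log d}$ for every $d\geq 2$. For large $d$ this is immediate since the first term is $(1+o(1))\sqrt{C_{\max}\log d}$ and the second vanishes; the only delicate case is $d=2$, where the inequality $\sqrt{\log 4}+\tfrac{1}{2}\sqrt{1/\log 4} < 2\sqrt{\log 2}$ still holds (both sides $\approx 1.6$–$1.67$), so the stated constant $2$ is valid throughout the range $d\geq 2$.

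The main obstacle is cosmetic rather than conceptual: a naive choice of the splitting point $t_0=\sqrt{C_{\max}\log d}$ already delivers the asymptotic bound $(1+o(1))\sqrt{C_{\max}\log d}$, but getting the clean uniform constant $2$ down to $d=2$ requires slightly inflating $t_0$ to $\sqrt{C_{\max}\log(2d)}$ and handling the Mill's ratio step non-asymptotically. If this boundary check turns out to be uncomfortably tight in an application, one can instead absorb the sub-dominant term $\tfrac{1}{2}\sqrt{C_{\max}/\log(2d)}$ into a factor $(1+\varepsilon)$ in front of the leading $\sqrt{C_{\max}\log d}$, producing a cleaner constant at the mild cost of losing the closed form.
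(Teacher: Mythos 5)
Your proof is correct and follows essentially the same route as the paper's: a layer-cake split of $\E\max_j X_j$ at a threshold, the union bound, and the Mill's-ratio Gaussian tail estimate from Lemma \ref{lem: mill_ratio}. The one substantive difference is your bookkeeping of the factor $2$ in the hypothesis $\P(|X_j|\geq t)\leq 2\exp(-t^2/C_j)$: you keep it and compensate by splitting at $t_0=\sqrt{C_{\max}\log(2d)}$, whereas the paper splits at $\sqrt{C_{\max}\log d}$ and silently writes the tail as $d\int_{t^*}^{\infty}\exp(-t^2/C_{\max})\,dt$ without the $2$ --- with that factor restored, the paper's own choice of threshold would give $t^*+\sqrt{C_{\max}/\log d}$, which exceeds $2\sqrt{C_{\max}\log d}$ at $d=2$. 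So your version is actually the tighter and more careful one, and your boundary check at $d=2$ (which I verified: $\sqrt{\log 4}+\tfrac{1}{2}\sqrt{1/\log 4}\approx 1.602 < 2\sqrt{\log 2}\approx 1.665$, with the gap only widening as $d$ grows since the difference is increasing in $d$) closes a small gap in the published argument rather than introducing one.
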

\begin{proof}
	Let $t^* = \sqrt{C_{\max} \log d}$. 
	We have
	\beqrs
	\E\max_{1\leq j\leq d} X_j&\leq& t^* + d\int_{t^*}^{\infty}\exp(-\frac{t^2}{C_{\max}}) dt\\
	&=& t^* + d\sqrt{\frac{C_{\max}}{2}}\int_{\sqrt{2\log d}}^{\infty}\exp(-\frac{s^2}{2}) ds\\
	&\leq& t^* + \sqrt\frac{C_{\max}}{4 \log d}.
	\eeqrs
	where the first inequality follows by the union bound, the second equality follows by the change of variables: $t = s\sqrt{C_{\max}/2}$, and the last inequality follows by the tail bound $1-\Phi(s) \leq\phi(s)/s$ in Lemma \ref{lem: mill_ratio}. The proof of the lemma is complete by noting $\log d>1/4$ when $d\geq2$. 
\end{proof}

\begin{lemma}\label{bound_W}
	Assume that the canonical kernel $f(U_{kj}, V_{lj})$ defined in (\ref{eq: canonical}) be uniformly bounded by a constant $B$. Recall that
	\beqrs
	W_j = \frac{1}{\sqrt n(n-1)}\sum_{1\leq k\neq l\leq n} f_j(U_{kj}, V_{lj}),
	\eeqrs
	and $W = (W_1, \dots, W_p)$. We have
	\beqrs
	E\max_{1\leq j\leq p} W_j \leq \frac{C\log p}{\sqrt n},
	\eeqrs
	where $C$ is a constant related to $B$.
\end{lemma}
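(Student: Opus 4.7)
The plan is to reduce $W_j$ to a standard one--sample, completely degenerate $U$--statistic of order two on the i.i.d.\ pairs $Z_i=(U_i,V_i)$, and then combine a sub--exponential moment bound for such statistics with the standard Orlicz maximal inequality. The within--pair dependence of $U_i,V_i$ causes no difficulty, since the pairs themselves remain i.i.d.

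First, splitting the ordered sum into its $k<l$ and $k>l$ parts and relabeling gives
\[
\sum_{1\le k\neq l\le n}f_j(U_{kj},V_{lj})=2\sum_{k<l}\bar g_j(Z_k,Z_l),\qquad \bar g_j(Z_k,Z_l):=\tfrac{1}{2}\bigl[f_j(U_{kj},V_{lj})+f_j(U_{lj},V_{kj})\bigr],
\]
so $W_j=\sqrt n\,\bar U_{n,j}$ with $\bar U_{n,j}=\binom{n}{2}^{-1}\sum_{k<l}\bar g_j(Z_k,Z_l)$. From the definition (\ref{eq: canonical}) one reads off the canonical identities $\E_V f_j(u,V)\equiv \E_U f_j(U,v)\equiv 0$, which immediately give
\[
\E[\bar g_j(Z_k,Z_l)\mid Z_k]=\tfrac{1}{2}\bigl\{\E_{V}f_j(U_k,V)+\E_{U}f_j(U,V_k)\bigr\}=0.
\]
Thus $\bar g_j$ is a symmetric, completely degenerate kernel bounded by $B$, and $\bar U_{n,j}$ is a classical one--sample canonical $U$--statistic of degree two.

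Next, an Arcones--Gin\'e type exponential inequality for bounded canonical symmetric second--order $U$--statistics yields $\P(|n\bar U_{n,j}|\ge t)\le c_1 \exp(-c_2 t/B)$, equivalently $\|n\bar U_{n,j}\|_{\psi_1}\le CB$, and hence $\|W_j\|_{\psi_1}\le CB/\sqrt n$. A self--contained derivation is available via de~la~Pe\~na's decoupling inequality, which replaces $(V_l)$ with an independent copy $(V'_l)$ and reduces $\bar U_{n,j}$ to a bilinear form in two fully independent samples; two successive applications of Hoeffding's lemma (conditioning first on the $U$'s and then on the $V'$'s) then give the same Orlicz bound. The standard Orlicz maximal inequality $\E\max_{j\le p}X_j\le K\log(1+p)\max_j\|X_j\|_{\psi_1}$ applied to $X_j=W_j$ yields the claimed bound $\E\max_{j\le p}W_j\le C\,B\log p/\sqrt n$.

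The chief obstacle is the scalar tail bound: because $\bar g_j$ has degree two, $W_j$ is only sub--exponential rather than sub--Gaussian, and it is precisely this sub--exponential tail that produces the $\log p/\sqrt n$ rate (a sub--Gaussian estimate would give the strictly smaller $\sqrt{\log p/n}$, which is not what the lemma claims). One must also take care that the intra--pair dependence $U_i\not\perp V_i$ does not spoil the canonical property of $\bar g_j$; the computation in the first paragraph is exactly what verifies this.
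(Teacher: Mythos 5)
Your proof is correct, but it takes a genuinely different route from the paper's. You first fold the two-sample statistic into a one-sample, completely degenerate $U$-statistic on the i.i.d.\ pairs $Z_i=(U_i,V_i)$ (correctly verifying that the canonical property of $f_j$ survives the intra-pair dependence, since degeneracy only requires integrating out the independent pair $Z_l$), then invoke the Arcones--Gin\'e exponential inequality to get a per-coordinate sub-exponential bound $\|W_j\|_{\psi_1}\leq CB/\sqrt n$, and finish with the Orlicz maximal inequality, which supplies the factor $\log p$. The paper instead applies the randomization inequality to pass to a decoupled Rademacher chaos $\sum_{k}\eps_k a_{k,j}$, bounds the conditional expected maximum by a sub-Gaussian maximal inequality with the random variance proxy $\sum_k a_{k,j}^2$, and then controls $\E\max_j\sum_k a_{k,j}^2$ via the Hanson--Wright inequality; it never produces per-coordinate tail bounds. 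Your argument is shorter and more modular, and it makes transparent why the rate is $\log p/\sqrt n$ rather than $\sqrt{\log p/n}$ (the chaos is only sub-exponential); the paper's argument is more self-contained, building the estimate from Hoeffding, Hanson--Wright and Mill's ratio without citing a $U$-statistic concentration result as a black box. One caveat: your parenthetical ``self-contained derivation'' via decoupling plus two successive applications of Hoeffding's lemma is too glib as stated --- conditioning and applying Hoeffding twice leaves you with a random variance proxy that is itself a quadratic form, and controlling it requires an extra step (essentially Hanson--Wright, which is exactly what the paper does). Since your main route goes through Arcones--Gin\'e, this does not affect the validity of the proof.
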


\begin{proof}
	
	By using Theorem 3.5.3 of \cite{de2012decoupling}, i.e.,  the randomization inequality, we have
	\beqrs
	E (\max_{j} \sum_{1\leq k\neq l\leq n} f_j(U_{kj}, V_{lj}))\leq C_1 E(\max_{j} \sum_{1\leq k\neq l\leq n} \eps_k \xi_l f_j(U_{kj}, V_{lj}) )
	\eeqrs
	where $\eps_k$ and $\xi_l$ are independent Rademacher random variables, $k,l=1,\dots, n$. Let $\eps = (\eps_1, \dots, \eps_n)\trans$ and $\xi = (\xi_1, \dots, \xi_n)\trans$. 
	Define 
	\beqr\label{define_akj}
	n^{-1}\sum_{1\leq k\neq l\leq n} \eps_k \xi_l f_j(U_{kj}, V_{lj}) =  \sum_{k=1}^n \eps_k \{n^{-1}\sum_{l=1}^{n}\xi_l f_j(U_{kj}, V_{lj}) \}\defby \sum_{k=1}^n \eps_k a_{k,j}.
	\eeqr
	Thus $a_{k,j}$ is uniformly bounded on $[-B, B]$. Let $a_{j} = (a_{1,j}, \dots, a_{n,j})$. Conditional on $a_j$, by Hoeffding's inequality,
	\beqrs
	\P\left(|\sum_{k=1}^n \eps_k a_{k,j}|\geq t\mid a_j \right) \leq 2\exp\left( -\frac{ t^2}{2\sum_{k=1}^{n}a_{k,j}^2 }\right).
	\eeqrs
	Now we use Lemma \ref{maxE_bound}. By treating $X_j$ as $\sum_{k=1}^n \eps_k a_{k,j}$ and $C_j$ as $2\sum_{k=1}^{n}a_{k,j}^2$, we have
	\beqrs
	\E\{ \max_j \sum_{k=1}^n \eps_k a_{k,j} | (a_1, \dots, a_p)\}< 2\sqrt{\max_j  2\sum_{k=1}^{n}a_{k,j}^2  \log p }.
	\eeqrs
	Taking expectation with respect to $(a_1, \dots, a_p)$, we have
	\beqr\nonumber
	\E\{ \max_j \sum_{k=1}^n \eps_k a_{k,j} \}&<& \sqrt{8\log p}\ \E\sqrt{\max_j  \sum_{k=1}^{n}a_{k,j}^2   }\\\label{bound_akj}
	&\leq&\sqrt{8\log p} \ \sqrt{\E\{\max_j  \sum_{k=1}^{n}a_{k,j}^2   \} },
	\eeqr
	where the last inequality follows by Jensen inequality. Thus it suffices to bound $\E\{\max_j  \sum_{k=1}^{n}a_{k,j}^2\}$. 
	Note that for each $j$, we have
	\beqr\label{decomp_akj}
	\sum_{k=1}^{n}a_{k,j}^2 = n^{-2}\sum_{k=1}^{n}\sum_{l=1}^n	 f_j(U_{kj}, V_{lj})^2 + n^{-2}\sum_{k=1}^{n}\sum_{l_1=1}^n	 \sum_{l_2=1}^n	f_j(U_{kj}, V_{l_1j})f_j(U_{kj}, V_{l_2j})\xi_{l_1}\xi_{1_2}
	\eeqr
	The first term is uniformly bounded by $B^2$. To deal with the second term, let $M^j$ be an $n\times n$ matrix whose $(k,l)$-th element is given by
	\beqrs
	n^{-1}\sum_{k=1}^{n}	f_j(U_{kj}, V_{l_1j})f_j(U_{kj}, V_{l_2j}).
	\eeqrs
	Thus the second term in (\ref{decomp_akj}) is equivalent to $n^{-1}\xi\trans M^j\xi$. By the Hanson-Wright inequality,
	\beqrs
	\P\left(|\xi\trans M^j\xi|>t\mid M^j\right)\leq 2 \exp\left\{-C_1\min\left( \frac{t^2}{\|M^j\|_F^2 }, \frac{t}{\|M^j\|_2} \right) \right\}.
	\eeqrs
	Let $\gamma_1 = \max_j\{\|M^j\|_F\}$, $\gamma_2 = \max_j\{\|M^j\|_2\}$, and $t^* = \max_j\{ \gamma_1\sqrt{\log(p)/C_1}, \gamma_2\log(p)/C_1\}$. By the union bound, 
	\beqrs
	\E\left(\max_j \xi\trans M^j\xi \mid \{M^j\}_{j=1}^d\right) &=& \int_{0}^{\infty}\P\left(\max_j |\xi\trans M^j\xi| \geq t\mid \{M^j\}_{j=1}^d\right)dt \\
	&\leq& t^* + 2p \int_{t^*}^{\infty}\max\left\{\exp(-\frac{C_1 t^2}{\gamma_1^2}) , \exp(-\frac{C_1 t}{\gamma_2})\right\}dt.
	\eeqrs
	By Lemma \ref{lem: mill_ratio}, $1-\Phi(s) \leq\phi(s)/s$. Thus
	\beqrs
	2p\int_{t^*}^{\infty}\exp(-\frac{C_1 t^2}{\gamma_1^2}) dt \leq \frac{\gamma_1}{\sqrt{C_1\log p }}.
	\eeqrs
	Moreover,
	\beqrs
	2p\int_{t^*}^{\infty}\exp(-\frac{C_1 t}{\gamma_2}) dt \leq \frac{2\gamma_2}{C_1}.
	\eeqrs
	Because $\gamma_2<\gamma_1$ and each element in $M^j$ being bounded by $B^2$ leads to $\gamma_1\leq n B^2$, we obtain that
	\beqrs
	\E\left(\max_j n^{-1}\xi\trans M^j\xi \mid \{M^j\}_{j=1}^p\right)\leq C_2 \frac{(\log p) \gamma_1}{n}\leq C_2B^2\log p.
	\eeqrs
	Following (\ref{bound_akj}) and (\ref{decomp_akj}), we have $\E\{ \max_j \sum_{k=1}^n \eps_k a_{k,j} \}< C_3\log p$.	Thus we have 
	\beqrs
	E\max_{1\leq j\leq p} W_j \leq \frac{C\log p}{\sqrt n},
	\eeqrs
	where $C$ is a constant related to $B$. The proof is completed.
\end{proof}

\subsection{Proof of Theorem \ref{thm: ss_consistency}}

Denote $p\defby M-1$. 
We first quantify the difference between $\wh\Gamma$ and $\Gamma$. 	By definition, $\wh \Gamma_g $ is clearly a consistent estimator of $\Gamma_g $. Both $\Gamma_g $ and $\wh\Gamma_g $ are $p\times p$ matrix with bounded elements. By the bounded difference inequality, for each $(j, j')$, we have
\beqrs
\P(|\wh\Gamma_{g, (j, j')} - \Gamma_{g, (j, j')}| \geq  t)  \leq 2\exp(-C n t^2).
\eeqrs
Let $N = \max\{n,p\}$.
By the union bound, we have
\beqr\label{eq: gamma_bound}
\P(\max_{j, j'}|\wh\Gamma_{g, (j, j')} - \Gamma_{g, (j, j')}| \geq C\sqrt{\log(N)/n})  \leq N^{-1}.
\eeqr
Let $E = \{\max_{j, j'}|\wh\Gamma_{g, (j, j')} - \Gamma_{g, (j, j')}| \leq C\sqrt{\log(N)/n}\}$. Then under the event $E$,  by Theorem 2 of \cite{chernozhukov2015comparison}, we have
\beqr\label{ineq: z}
z(\alpha, \wh\Gamma)\leq z(\alpha +C_1 (\log^{5/6}N) n^{-1/6}, \Gamma).
\eeqr

For simplicity, we consider $m$ as fixed and drop $m$ in the notation. And we let the number of bootstraps $B$ go to infinity and ignore the bootstrap variability. Recall that $\wh p$ is the bootstrap $p$-value obtained by Algorithm 1. It follows that
\beqrs
&&\P(\wh p\leq\alpha) =\P(\min_{j\neq m}\wh\mu_{m,j} \leq z(\alpha, \wh\Gamma))\\ 
\leq &&\P(\min_{j\neq m}\wh\mu_{m,j} \leq z(\alpha, \wh\Gamma), E) + \P(E^c)\\ 
\leq &&\P(\min_{j\neq m}\wh\mu_{m,j} \leq z(\alpha +C_1 (\log^{5/6}N) n^{-1/6}, \Gamma), E ) + N^{-1}\\ 
\leq&&\P(\min Z_{\Gamma} \leq z(\alpha +C_1 (\log^{5/6}N) n^{-1/6}, \Gamma) ) +  C  n^{-K_1}(\log p)^7+ N^{-1}\\
=&&\alpha +C_1 (\log^{5/6}N) n^{-1/6}+  C  n^{-K_1}(\log p)^7+ N^{-1}.
\eeqrs
The first inequality holds by basic probability algebra. The second inequality holds by (\ref{eq: gamma_bound}) and (\ref{ineq: z}). And the last inequality holds because of Theorem \ref{thm: gauss_approxi}. This finishes the proof.

\subsection{Proof of Theorem \ref{thm: screening_consistency}}

Based on the result of Theorem \ref{thm: ss_consistency}, it suffices to show that there exists a boundary set near the null hypothesis that is contained in $\wh J_m$ with high probability. Let 
\beqrs
J_m = \{j\neq m: \frac{\sqrt n\mu_{m,j}}{\sigma_{m,j}}< c_{\alpha'} \}.
\eeqrs
The proof consists of three steps. 

{\bf Step 1:} We show that 
\beqrs
\P(\wh\mu_{m,j}\geq 0, \forall j \in J_m^c) \rightarrow 1.
\eeqrs
Note that
\beqrs
\P(\exists j\in J_m^c, \wh\mu_{m,j}< 0) &\leq& \P(\min_{1\leq j\leq p}\frac{\sqrt n(\wh\mu_j - \mu_j)}{\sigma_j} <- c_{\alpha'} ) \\
&\leq&   \P\Big(\min_{1\leq j\leq p}\frac{\sqrt n(\wh\mu_j - \mu_j)}{\wh\sigma_j} <- (1-r) c_{\alpha'} \Big)  + \P\Big(\max_{1\leq j\leq p}|\frac{\sigma_j}{\wh\sigma_j} - 1|>r \Big)\\
&\leq& \sum_{j=1}^{p}\P\Big(\frac{\sqrt n(\wh\mu_j - \mu_j)}{\wh\sigma_j} <- (1-r)c_{\alpha'}\Big) + o(1)\\
&=&  o(1) ,
\eeqrs
where the first inequality follows the definition of $J_m^c$. The second inequality follows by noting that $\sigma_j\geq \wh\sigma_j(1-r)$ when  $\sigma_j/\wh\sigma_j - 1\geq -r$ for some $0<r<1$. The third inequality follows by the union bound and Lemma \ref{lem: sigma_bound}. To deal with the last equality, note that by using the bounded difference inequality, we get $\P(\wh\mu_j - \mu_j\geq t)\leq \exp(-nt^2/8)$. Thus it follows that $\P(\sqrt{n}(\wh\mu_j - \mu_j)\geq t')\leq \exp(-t'^2/8)$.  Furthermore,  by Lemma \ref{lem: mill_ratio},
\beqrs
c_{\alpha'} = \Phi^{-1}\Big(1-\frac{\alpha'}{p^{1+s}}\Big)\asymp \sqrt{\log p^{1+s}}.
\eeqrs
Thus
\beqrs
\P\Big(\frac{\sqrt n(\wh\mu_j - \mu_j)}{\wh\sigma_j} <- (1-r)c_{\alpha'}\Big)\leq\exp(-C c_{\alpha'}^2) =\frac{C}{p^{1+s}}.
\eeqrs
By summing over  $j=1,\dots,p$, we get the bound is of order $C/p^r$, which goes to zero as $p\rightarrow\infty$.

{\bf Step 2:} We show that
\beqrs
\P(J_m\subset \wh J_m) \rightarrow 1.
\eeqrs
Note that 
\beqr\label{eq: step2}
J_m\not \subset \wh J_m  \Rightarrow \exists j, \mbox{such that} \frac{\sqrt n\mu_{m,j}}{\sigma_{m,j}}<  c_{\alpha'}, \frac{\sqrt n\wh\mu_{m,j}}{\wh\sigma_{m,j}} > 2 c_{\alpha'}.
\eeqr
Thus we have
\beqrs
\P( J_m\not \subset \wh J_m) &\leq& \P\Big(\min_{1\leq j\leq p} \sqrt n(\mu_j -\wh\mu_j) + (2\wh\sigma_j - \sigma_j ) c_{\alpha'} <0  \Big)\\
&\leq & \P\Big( \min_{1\leq j\leq p} \sqrt n(\mu_j -\wh\mu_j) +\wh\sigma_j(1-r) c_{\alpha'} <0   \Big)+ \P\Big(\max_{1\leq j\leq p}|\frac{\sigma_j}{\wh\sigma_j} - 1|>r \Big)\\
&=&\P\Big( \min_{1\leq j\leq p} \frac{\sqrt n(\mu_j -\wh\mu_j)}{\wh\sigma_j} < -(1-r) c_{\alpha'}   \Big)+ \P\Big(\max_{1\leq j\leq p}|\frac{\sigma_j}{\wh\sigma_j} - 1|>r \Big).
\eeqrs
The first inequality holds by (\ref{eq: step2}). The second inequality holds because $\wh\sigma_j(1-r)\leq 2\wh\sigma_j - \sigma_j $ when $\sigma_j/\wh\sigma_j - 1 \leq r$.
Then the result of step 2 follows similar reasoning in step 1.

{\bf Step 3:} We are ready to prove the main argument of the theorem. When $J_m = \emptyset$, we know from Step 1 that $\wh\mu_{m,j}\geq0$, $\forall j \in J_m^c$ with probability larger than $1-\alpha'+o(1)$, which implies that $H_{0,m}$ holds with probability larger than $1-\alpha'+o(1)$. Thus we have $\P(m\in\calA_{ss}) \geq 1 - \alpha' + o(1)$.

When $|J_m|\geq 1$, conditional on the event of $\{\wh\mu_{m,j}\geq0$, $\forall j\in J_m^c\}\cap \{J_m\subset \wh J_m\}$,  we have
\beqrs
\P(\wh p\leq\alpha) &=&\P(\min_{j\in \wh J_m}\wh\mu_{m,j} \leq z(\alpha, \wh\Gamma))  =\P(\min_{j\in J_m}\wh\mu_{m,j} \leq z(\alpha, \wh\Gamma)) \\
&\leq&\P(\min_{j\in J_m}\wh\mu_{m,j} \leq z(\alpha, \wh\Gamma_{J_m}))\leq \alpha + o(1).
\eeqrs
The first equality follows by definition of $p$-value, and the second equality follows the conditional event. The first inequality follows by the fact that $z(\alpha, \wh\Gamma)\leq z(\alpha, \wh\Gamma_{J_m})$, and the last inequality follows by Theorem \ref{thm: ss_consistency}. Thus we have
\beqrs
\P(m\in\calA_{ss})\geq 1 - \alpha +o(1).
\eeqrs

\subsection{Proof of Theorem \ref{thm: gauss_approxi}}

{\bf Preliminaries.}
In this section, we first define the smooth approximation of the max function and indicator function.

Let $Z = (Z_1,\dots, Z_d)\in\mR^p$. For any $\beta>0$, define the smooth max function $F_\beta(Z)$ 
\beqrs
F_\beta(Z) = \beta^{-1}\log\{\sum_{j=1}^{d}\exp(\beta Z_j) \}. 
\eeqrs
$\beta$ is the smoothing parameter that controls the level of approximation. Simple calculations show that
\beqrs
0\leq F_\beta(Z) - \max_{1\leq j\leq d}Z_j \leq e_\beta,
\eeqrs
where $e_\beta = \beta^{-1}\log p$.	Define $\pi_j(Z) = \partial{F_\beta(Z)}/\partial Z_j$. By Lemma A.2 of \cite{chernozhukov2013gaussian}, we have $\pi_j(Z)\geq0$ and $\sum_{j=1}^{d}\pi_j(Z) = 1$. 

Now we define the smoothed indicator function $g(\cdot)$. Let $\C^k$ denote the class of $k$ times continuously differentiable functions that are bounded, and $g_0(\cdot) :\mR\rightarrow [0, 1] \in\C^3$. Let $g_0(s) = 1$ if $s\leq0$ and $g_0(s) = 0$ if $s\geq1$. Further assume that $\sup_{s\in\mR}|\partial^qg_0(s)|\leq C_0$ for some constant $C_0>0$ and $q = 0, 1, 2, 3$.
Define $g(s) = g_0(\psi (s - t- e_\beta))$ for some $\psi>0$. It follows that
\beqrs
&&\quad \sup_s|\partial ^q g(s)| \leq C_0 \psi^q, \\
&& \indic(s\leq t + e_\beta)\leq g(s)\leq \indic(s\leq t + e_\beta + \frac{1}{\psi}).
\eeqrs
Clearly, a large $\beta$ and $\psi$ will lead to better approximations of the corresponding non-smooth functions. 

\begin{proof}

	By the triangle inequality,
	\beqrs
	\rho(T, Z) \leq  \sup_{t\in\mR}\big|\P(\|T\|\leq t)  - \P(\|L\|\leq t) \big|+\sup_{t\in\mR}\big|\P(\|L\|\leq t)  - \P(\|Z\|\leq t) \big|.
	\eeqrs 
	Note that the second term can be bounded by Lemma 2.3 in \cite{chernozhukov2013gaussian}. 
	To bound the first term, we begin with bounding $|E\{ \lambda (T) - \lambda(L)\}|$, where $\lambda = g\circ F_\beta$.
	Through Taylor's expansion, 
	\beqrs
	|E\{ \lambda (T) - \lambda(L)\}| = |E\{\sum_j W_j \partial g(F_\beta(\xi))\pi_j(\xi)\}|\leq C_0\psi E\max_{1\leq j\leq p} W_j,
	\eeqrs
	where $\xi$ is a random vector on the line segment between $T$ and $L$. The inequality holds because  $\pi_j(Z)\geq0$ and $\sum_{j=1}^{d}\pi_j(Z) = 1$. 
	
	By the properties of $F_{\beta}(\cdot)$ and $g(\cdot)$,  it follows that
	\beqr\nonumber
	&&\P(\|T\|\leq t)  \leq \P(F_\beta(T)\leq t + e_\beta) \leq \E\{g(F_\beta(T)) \} = \E\{\lambda(T)\}\\\nonumber
	&&\leq \E\{\lambda(L)\} + C_0\psi E\|W\| _{\infty}= \E\{g(F_\beta(L)) \} + C_0\psi E\max_{1\leq j\leq p} W_j\\\nonumber
	&&\leq \P(F_\beta(L)\leq t+e_\beta+\psi^{-1}) + C_0\psi E\max_{1\leq j\leq p} W_j\\\label{dif_TL}
	&&\leq  \P(\|L\|\leq t+e_\beta+\psi^{-1}) + C_0\psi E\max_{1\leq j\leq p} W_j. 
	\eeqr
	By Lemma 2.1 of \cite{chernozhukov2013gaussian}, i.e., the anti-concentration inequality,	
	\beqr\label{diff_LL}
	\P(\|L\|\leq t+e_\beta+\psi^{-1}) - \P(\|L\|\leq t)\leq C_1 (e_\beta+\psi^{-1})\sqrt{1\vee \log(d\psi)}.
	\eeqr
	By equation (\ref{dif_TL}) and  (\ref{diff_LL}), together with Lemma \ref{bound_W}, we have
	\beqrs
	\P(\|T\|\leq t)  - \P(\|L\|\leq t)\leq C_2 \frac{(\log p) \psi}{\sqrt n} + C_1 (e_\beta+\psi^{-1})\sqrt{1\vee \log(d\psi)}.
	\eeqrs
	To obtain a clear bound, we choose $\beta$ such that $e_\beta$ and $\psi$ are balanced. 
	
	By Lemma 2.3 of \cite{chernozhukov2013gaussian},  
	\beqrs
	\sup_{t\in\mR}\big|\P(\|L\|\leq t)  - \P(\|Z\|\leq t) \big|\leq C_3 n^{-1/8}(\log p)^7.
	\eeqrs
	Thus we have
	\beqrs
	\P(\|T\|\leq t)  - \P(\|Z\|\leq t)\leq C_1 (e_\beta+\psi^{-1})\sqrt{1\vee \log(d\psi)}+C_2 \frac{(\log p) \psi}{\sqrt n}  + C_3 n^{-1/8}(\log p)^7.
	\eeqrs
	The other half of the inequality follows similarly. Thus we have
	\beqrs
	\big|\P(\|T\|\leq t)  - \P(\|Z\|\leq t)\big|\leq C_1 (e_\beta+\psi^{-1})\sqrt{1\vee \log(d\psi)}+C_2 \frac{(\log p) \psi}{\sqrt n}  + C_3 n^{-1/8}(\log p)^7.
	\eeqrs	
	Let $\beta = n^{s_1}$, $\psi = n^{s_2}$, we obtain that under the high dimensional setting
	\beqrs
	\sup_{t\in\mR}\big|\P(\|T\|\leq t)  - \P(\|Z\|\leq t)\big|\leq C  n^{-K_1}(\log p)^7,
	\eeqrs	
	where $K_1 = \min\{s_1, s_2, 1/2 - s_2, 1/8\}$.
	This completes the proof.
	
\end{proof}

\section{Additional Simulations}

\begin{figure}
	\centering
	\begin{subfigure}[b]{0.45\textwidth}
		\centering
		\includegraphics[scale=0.28]{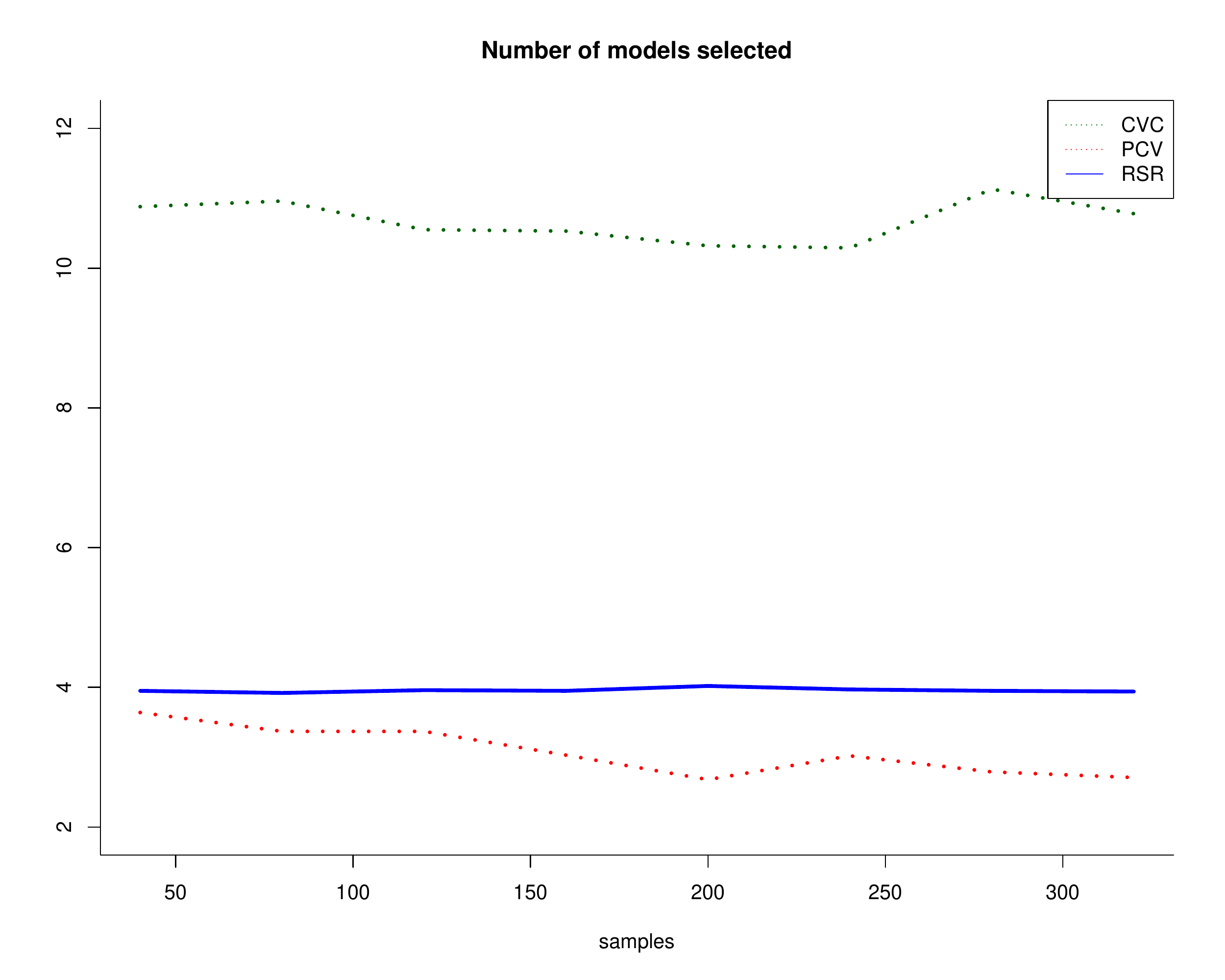}
	\end{subfigure}
	\hfill
	\begin{subfigure}[b]{0.45\textwidth}
		\centering
		\includegraphics[scale=0.28]{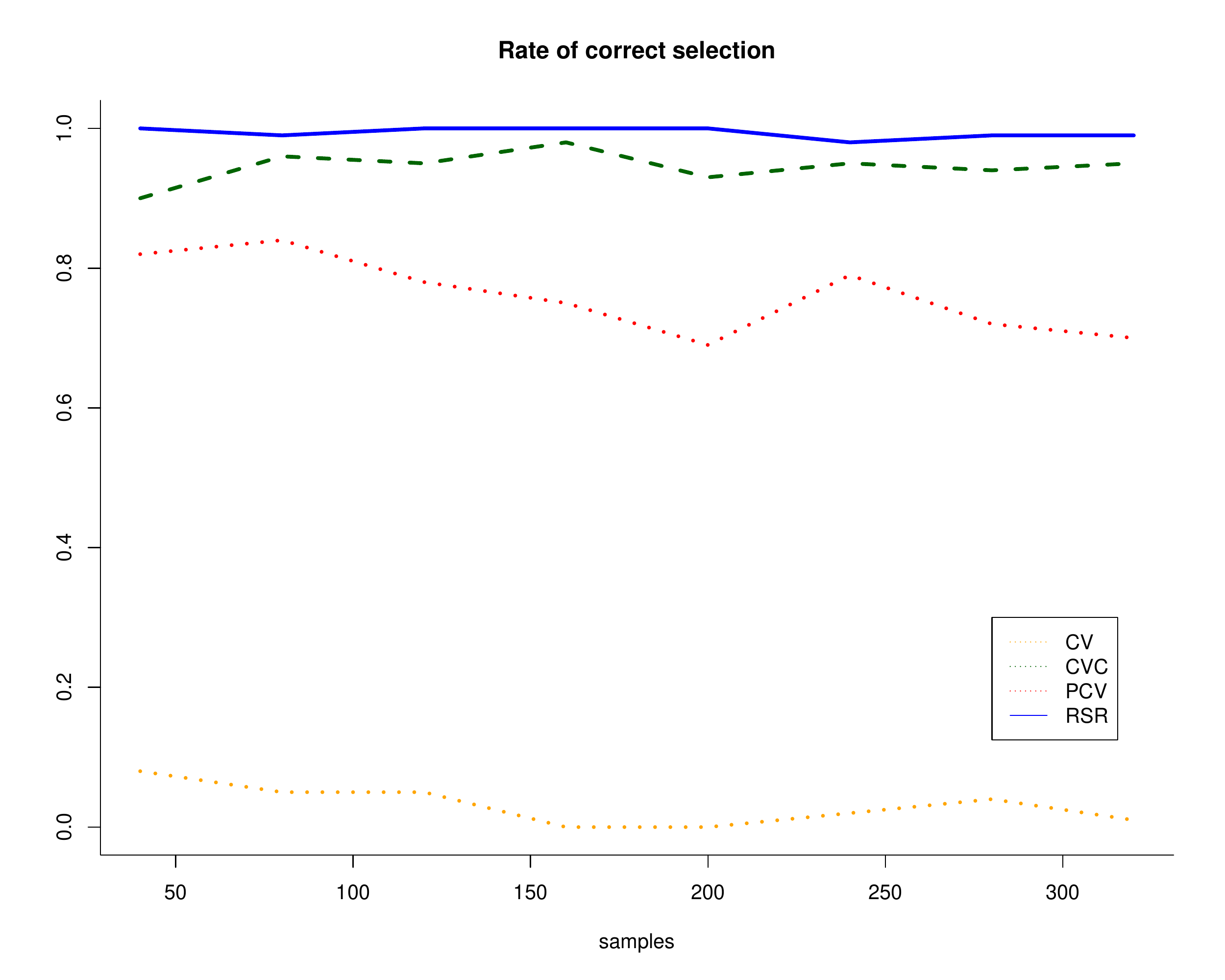}
	\end{subfigure}
	\begin{subfigure}[b]{0.45\textwidth}
		\centering
		\includegraphics[scale=0.3]{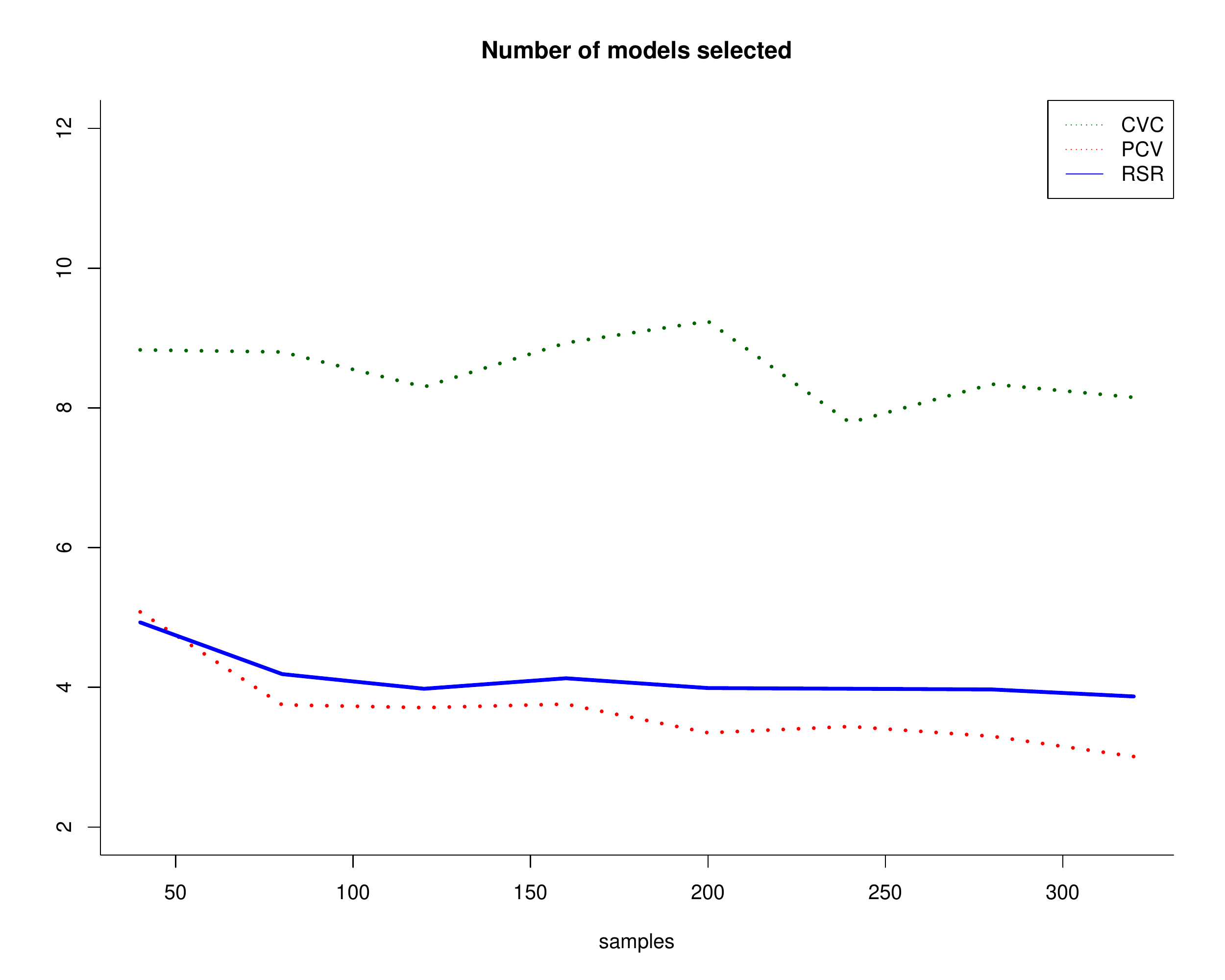}
	\end{subfigure}
	\hfill
	\begin{subfigure}[b]{0.45\textwidth}
		\centering
		\includegraphics[scale=0.3]{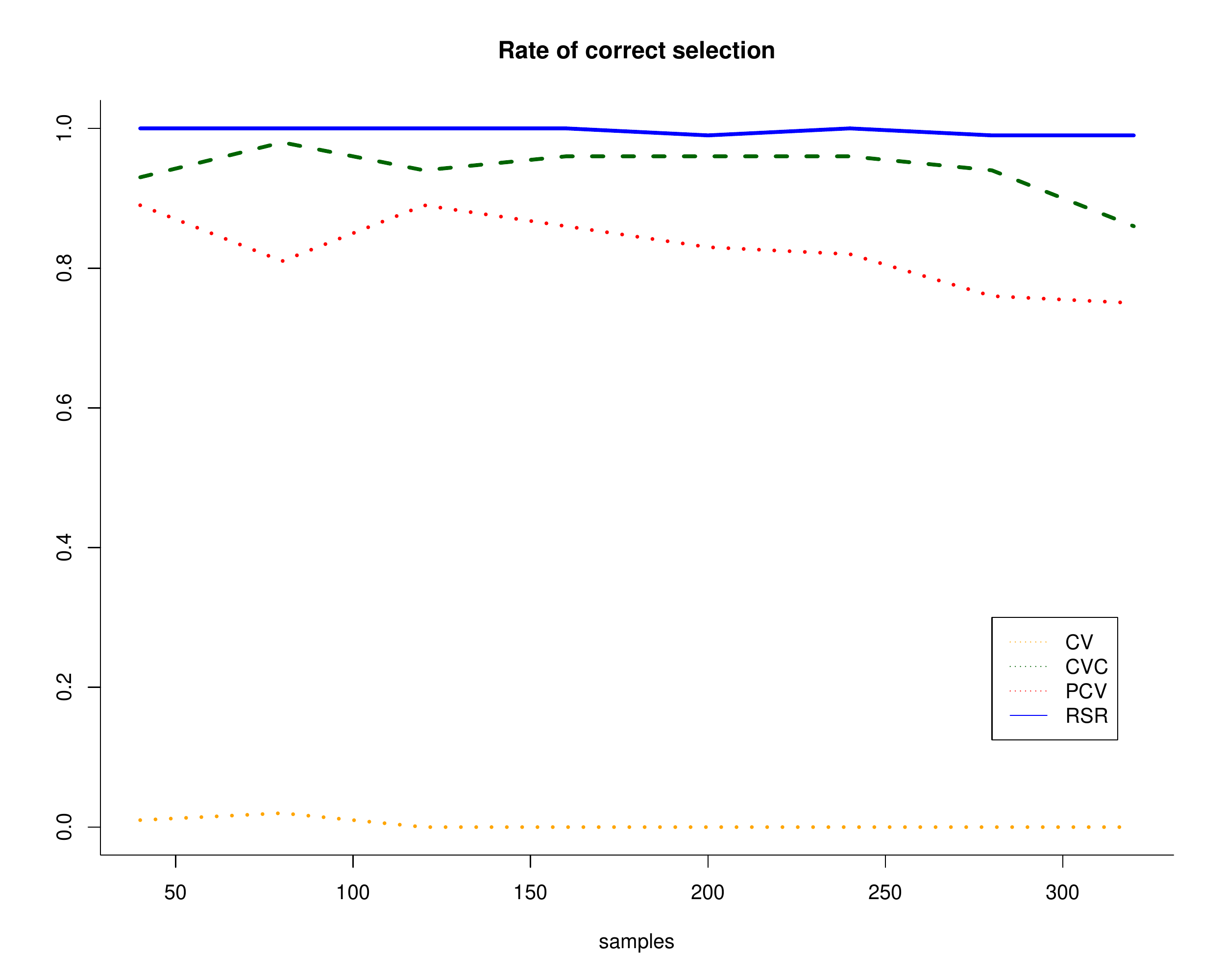}
	\end{subfigure}
	\begin{subfigure}[b]{0.45\textwidth}
		\centering
		\includegraphics[scale=0.3]{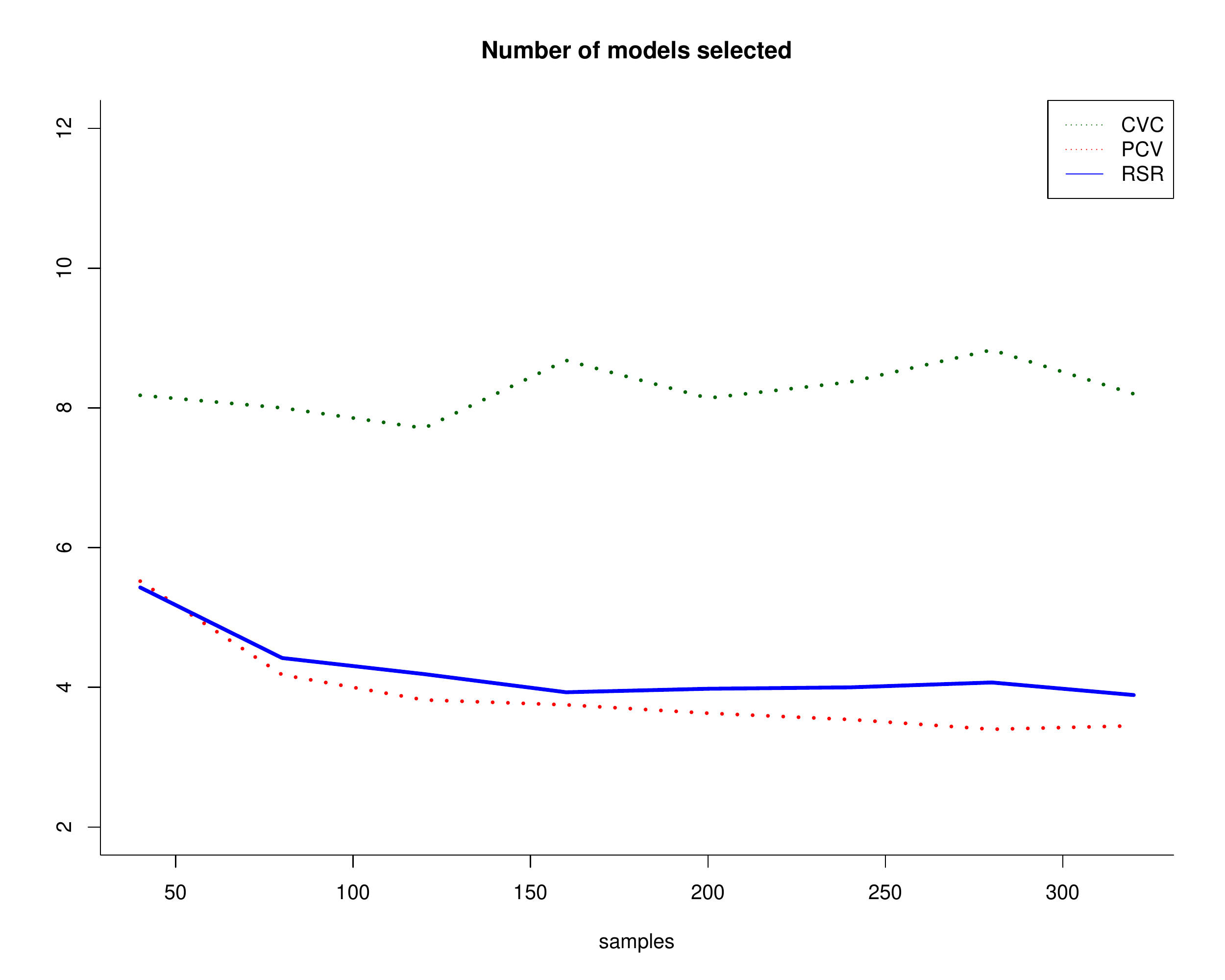}
	\end{subfigure}
	\hfill
	\begin{subfigure}[b]{0.45\textwidth}
		\centering
		\includegraphics[scale=0.3]{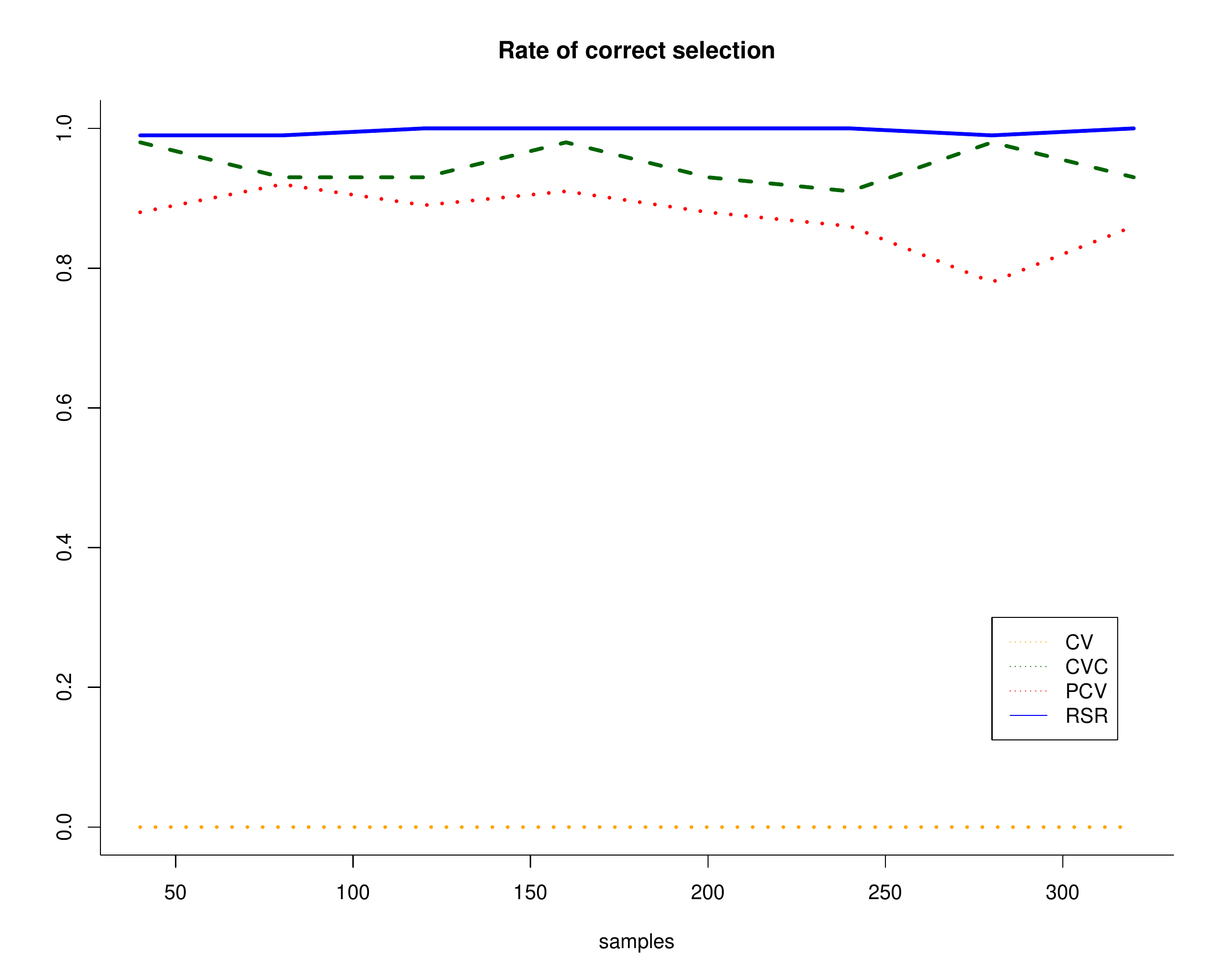}
	\end{subfigure}
	\caption{ The left column: average selected model size. The right column: the rate of correct selection. From the top row to the third row, $X$ is generated from $t_1$, $t_2$, and $t_3$, respectively. $\alpha = 0.05$   }\label{subset2}
\end{figure}

\begin{figure}
	\centering
	\begin{subfigure}[b]{0.45\textwidth}
		\centering
		\includegraphics[scale=0.35]{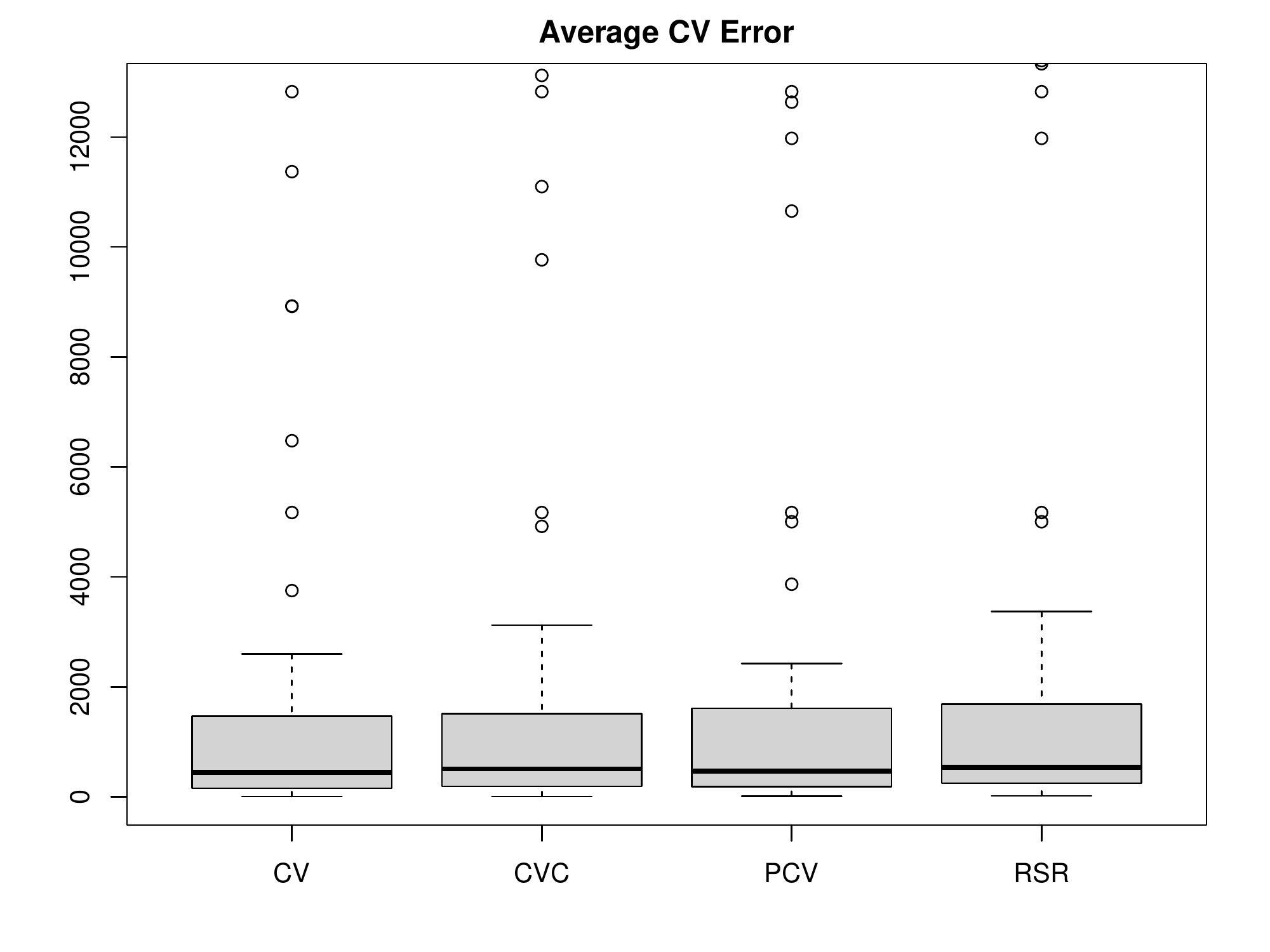}
	\end{subfigure}
	\hfill
	\begin{subfigure}[b]{0.45\textwidth}
		\centering
		\includegraphics[scale=0.35]{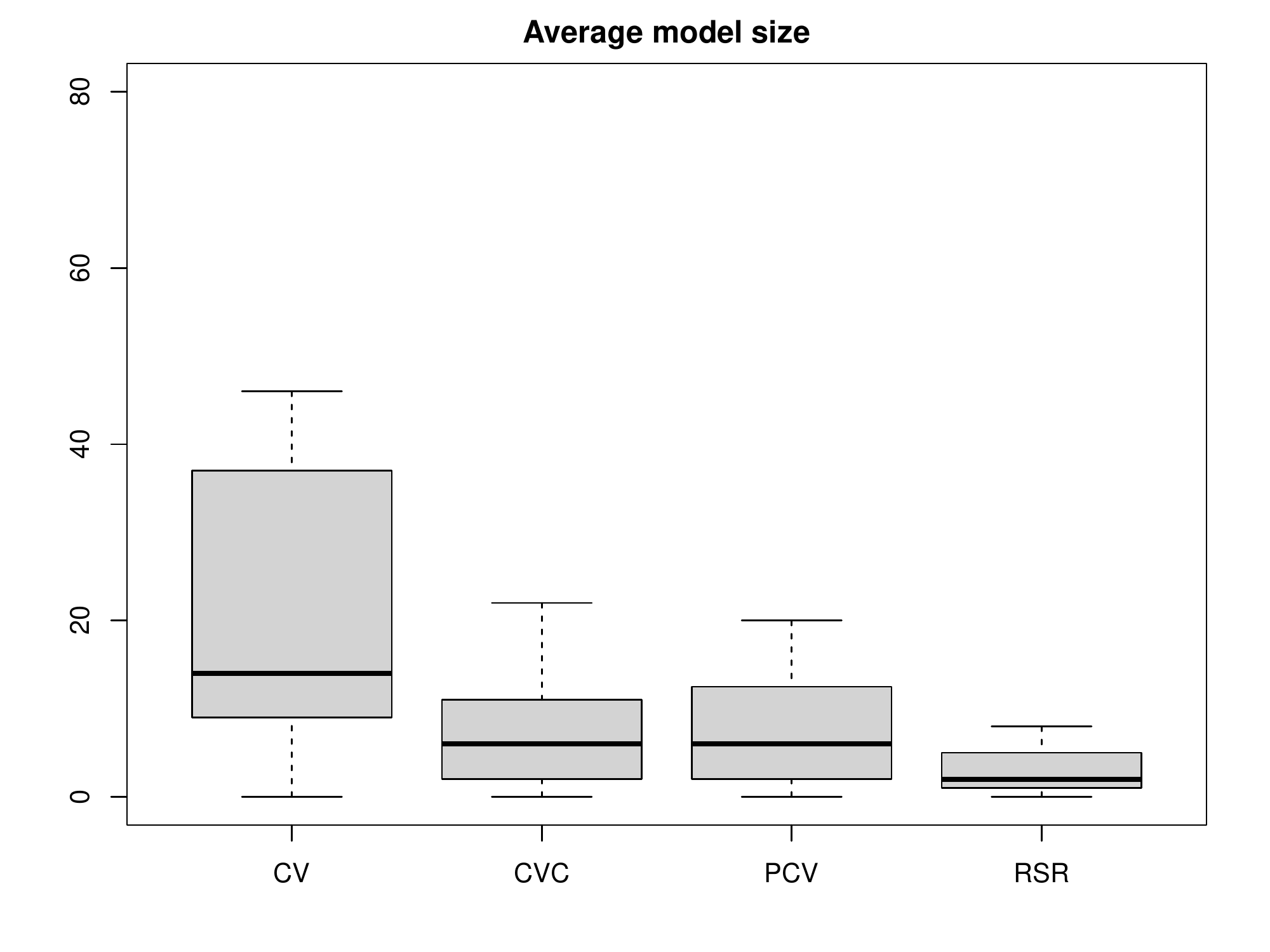}
	\end{subfigure}
	\begin{subfigure}[b]{0.45\textwidth}
		\centering
		\includegraphics[scale=0.35]{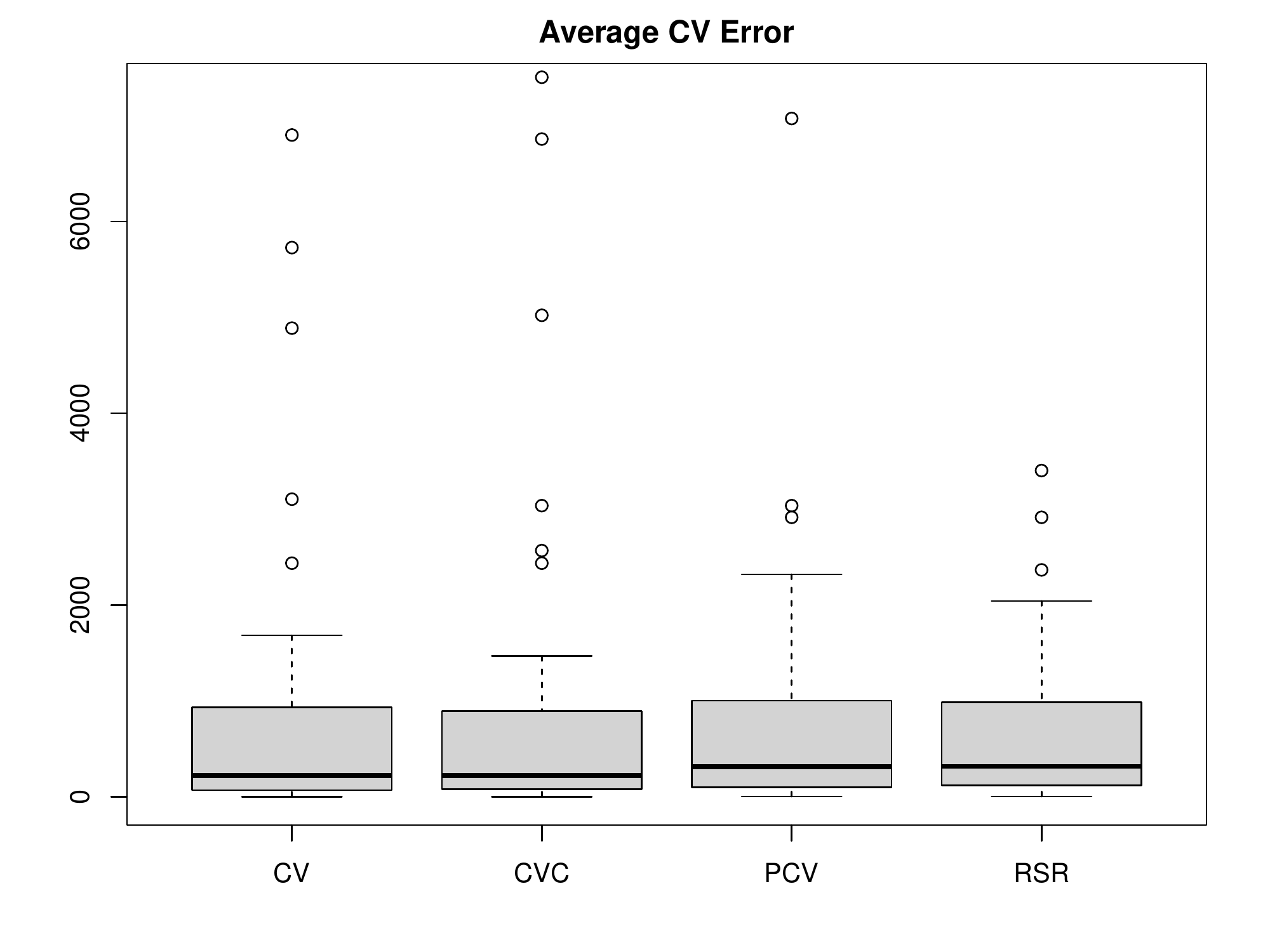}
	\end{subfigure}
	\hfill
	\begin{subfigure}[b]{0.45\textwidth}
		\centering
		\includegraphics[scale=0.35]{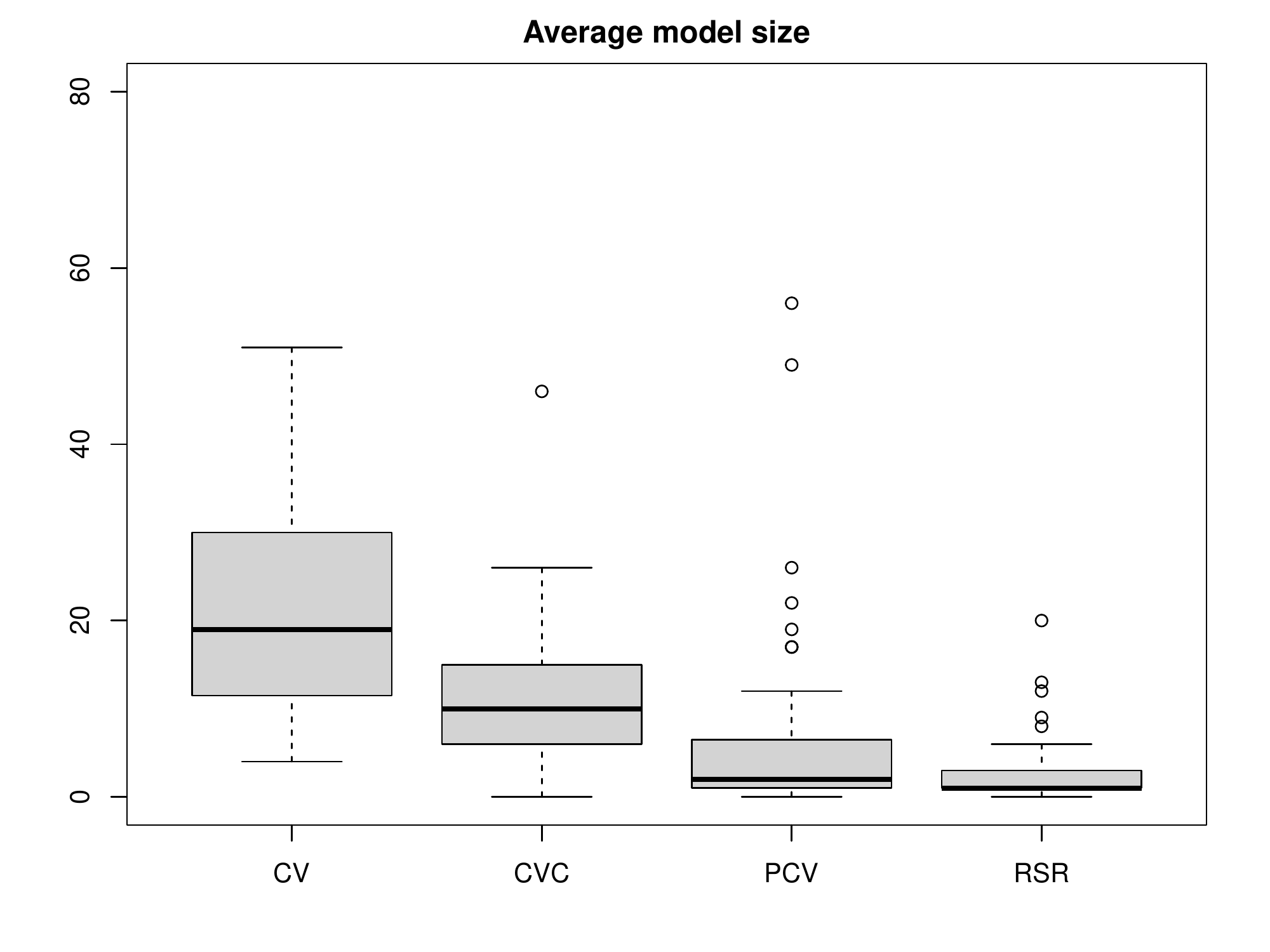}
	\end{subfigure}
	\caption{ Test error and selected model size. The first row represents Lasso without truncation on the data. The second row represents Lasso with truncation on the 99\% quantile of both $X$ and $Y$. $\alpha = 0.01$.}\label{fig: real2}
\end{figure}

 \end{document}